\numberwithin{equation}{section}
\newtheorem{defn}{Definition}
\newtheorem{propo}{Proposition}
\newtheorem{thm}{Theorem}
\newtheorem{lem}{Lemma}
\newtheorem{cor}{Corollary}
\newcommand{\ratio}{\frac{\sqrt{m}}{\sqrt{m-1}}}
\definecolor{darkred}{RGB}{150,0,0}
\definecolor{darkgreen}{RGB}{0,150,0}
\definecolor{darkblue}{RGB}{0,0,200}
\newcommand{\Keywords}[1]{\par\noindent
{\small{\em \noindent{\bf{Keywords}}\/}: #1}}
\titleformat{\section}[block]{\large\scshape\centering}{\thesection.}{1em}{} 
\titleformat{\subsection}[block]{\large}{\thesubsection.}{1em}{} 
\title{\vspace{-5mm}\selectfont\textbf{Simple Bounds for Noisy Linear Inverse Problems\\ with Exact Side Information}\vspace{2pt}} 
\author{Samet Oymak \hspace{35pt}Christos Thrampoulidis\hspace{35pt} Babak Hassibi\vspace{6pt}\\Department of Electrical Engineering \\ Caltech, Pasadena -- 91125
\thanks{Email: {\{soymak,cthrampo,hassibi\}@caltech.edu}. This work was supported in part by the National Science Foundation under grants CCF-0729203, CNS-0932428 and CIF-1018927, by the Office of Naval Research under the MURI grant N00014-08-1-0747, and by a grant from Qualcomm Inc.}
}
\date{}
\begin{document}

\maketitle 

\newcommand{\beq}{\begin{equation}}
\newcommand{\eeq}{\end{equation}}
\newcommand{\bea}{\begin{align}}
\newcommand{\eea}{{\end{align}}}

\newcommand{\vp}{\vspace{4pt}}
\newcommand{\M}{\mathbf{M}}
\newcommand{\tC}{\tilde{\mathbf{C}}}
\newcommand{\tD}{\tilde{\mathbf{D}}}
\newcommand{\Fh}{\mathbf{\hat{F}}}
\newcommand{\ft}{\mathbf{\tilde{f}}}
\newcommand{\Z}{\mathbf{Z}}
\newcommand{\hz}{\hat{\z}}
\newcommand{\W}{\mathbf{W}}
\newcommand{\U}{\mathbf{U}}
\newcommand{\V}{\mathbf{V}}
\newcommand{\F}{\mathbf{F}}
\newcommand{\Delxf}{\boldsymbol{\omega}(\hat{T}_f(\x_0))}
\newcommand{\Pelxf}{{\mathbf{P}}_f(\x_0,{\mathbb{R}}^+)}
\newcommand{\Celxf}{{\mathbf{C}}_f(\x_0,{\mathbb{R}}^+)}
\newcommand{\Dlf}{{\mathbf{D}}_f(\x_0,\la )}
\newcommand{\order}[1]{\mathcal{O}\left(#1\right)}

\newcommand{\Dmin}{{\mathbf{D}}^*_f}
\newcommand{\Plf}{{\mathbf{P}}_f(\x_0,\la )}
\newcommand{\Df}{{\mathbf{D}}_f(\x_0,}
\newcommand{\Pf}{{\mathbf{P}}_f(\x_0,}
\newcommand{\Clf}{{\mathbf{C}}_f(\x_0,\la )}
\newcommand{\Cf}{{\mathbf{C}}_f(\x_0,}
\newcommand{\halp}{\hat{\alpha}}
\newcommand{\hkap}{\hat{\kappa}}

\newcommand{\siglong}{\sigma_{min}(\A,T_f(\x_0)\cap \Sc^{n-1})}
\newcommand{\DC}{\boldsymbol{\omega}(\hat{T}_f(\x_0))}
\newcommand{\DCC}{\boldsymbol{\omega}(\mathcal{\Cc}\cap\Bc^{n-1})}
\newcommand{\TAC}{\hat{T}_f(\x_0)}
\newcommand{\Tco}{\A{T}_f(\x_0)}
\newcommand{\CC}{{\mathbf{C}}(\Cc)}
\newcommand{\PC}{{\mathbf{P}}(\Cc)}

\newcommand{\TCB}{\hat{\Cc}}

\newcommand{\lac}{\lambda_{\text{crit}}}
\newcommand{\labb}{\lambda_{\text{best}}}
\newcommand{\taub}{\tau_{\text{best}}}
\newcommand{\lam}{\lambda_{{\max}}}

\newcommand{\DelxC}{{\mathbf{\Delta}}_\Cc(\x_0)}
\newcommand{\Gb}{\mathbf{G}}
\newcommand{\Sb}{\mathbf{S}}
\newcommand{\X}{\mathbf{X}}
\newcommand{\A}{\mathbf{A}}
\newcommand{\Lb}{\mathbf{L}}
\newcommand{\Y}{\mathbf{Y}}

\newcommand{\bu}{\text{Proj}}
\newcommand{\corr}{\text{corr}}
\newcommand{\bi}{{\Pi}}
\newcommand{\w}{\mathbf{w}}
\newcommand{\heta}{\hat{E}}
\newcommand{\geta}{\hat{\zeta}}
\newcommand{\feta}{{E^*}}
\newcommand{\x}{\mathbf{x}}
\newcommand{\ub}{\mathbf{u}}
\newcommand{\g}{\mathbf{g}}
\newcommand{\vb}{\mathbf{v}}
\newcommand{\q}{\mathbf{q}}
\newcommand{\p}{\mathbf{p}}
\newcommand{\bb}{\mathbf{b}}
\newcommand{\e}{\mathbf{e}}
\newcommand{\fh}{{\hat{f}}}
\newcommand{\tb}{\mathbf{t}}
\newcommand{\y}{\mathbf{y}}
\newcommand{\s}{\mathbf{s}}
\newcommand{\z}{\mathbf{z}}
\newcommand{\bseta}{{\boldsymbol{\eta}}}
\newcommand{\Iden}{\mathbf{I}}
\newcommand{\cb}{\mathbf{c}}  
\newcommand{\ab}{\mathbf{a}}
\newcommand{\oneb}{\mathbf{1}}
\newcommand{\h}{\mathbf{h}}
\newcommand{\hx}{{\mathbf{x}}^{*}(\la,\z)}
\newcommand{\fx}{\hat{\mathbf{x}}(\la,\z)}
\newcommand{\hw}{{\mathbf{w}}^{*}(\la,\z)}
\newcommand{\wh}{\hat{\mathbf{w}}}
\newcommand{\fw}{\hat{\mathbf{w}}(\la,\z)}
\newcommand{\ffw}{\hat{\mathbf{w}_r}(\la,\z)}
\newcommand{\xn}{{\mathbf{x}}_0}

\newcommand{\tih}{\tilde{\mathbf{h}}}
\newcommand{\tz}{\tilde{\z}}
\newcommand{\tw}{\tilde{\w}}
\newcommand{\bT}{{\bar{T}}}
\newcommand{\bP}{{\bar{P}_\w}}
\newcommand{\wP}{{{P}_\w}}
\newcommand{\bS}{{\bar{S}}}

\newcommand{\Sc}{{\mathcal{S}}}
\newcommand{\map}{{\text{map}}}
\newcommand{\call}{{\text{calib}}}
\newcommand{\Rca}{{\mathcal{R}}}
\newcommand{\Rco}{{\mathcal{R}}_{\text{ON}}}
\newcommand{\Rcf}{{\mathcal{R}}_{\text{OFF}}}
\newcommand{\Rci}{{\mathcal{R}}_{\infty}}
\newcommand{\Rr}{\mathbf{\rho}}
\newcommand{\Bc}{{\mathcal{B}}}
\newcommand{\Yc}{\mathcal{Y}}
\newcommand{\Dc}{\mathcal{D}}
\newcommand{\Lc}{\hat{\mathcal{L}}}
\newcommand{\Lco}{\mathcal{{L}}}

\newcommand{\wo}{\mathbf{\w}^*}

\newcommand{\Fco}{\mathcal{F}}
\newcommand{\Uco}{\mathcal{{U}}}
\newcommand{\fone}{L}
\newcommand{\Ldev}{\mathcal{L}_{dev}}
\newcommand{\Xc}{\mathcal{X}}
\newcommand{\Zc}{\mathcal{Z}}
\newcommand{\Uc}{\hat{\mathcal{U}}}
\newcommand{\pol}{^\circ}
\newcommand{\Kc}{\mathcal{K}}
\newcommand{\Nc}{\text{Null}}
\newcommand{\Rc}{\text{Range}}
\newcommand{\Nn}{\mathcal{N}}
\newcommand{\Cc}{\mathcal{C}}
\newcommand{\Gc}{\mathcal{G}}
\newcommand{\Ac}{\mathcal{A}}
\newcommand{\tac}{{\tilde{\mathcal{A}}}}
\newcommand{\Pc}{\mathcal{P}}
\newcommand{\fac}{\tilde{f}}
\newcommand{\R}{\mathbb{R}}
\newcommand{\Pro}{{\mathbb{P}}}
\newcommand{\E}{{\mathbb{E}}}

\newcommand{\xu}{\hat{x}_i}
\newcommand{\xz}{x_{0,i}}
\newcommand{\wi}{\hat{w}_i}

\newcommand{\ti}{\tilde}
\newcommand{\el}{\eta_{DN}(\la,\z)}
\newcommand{\paf}{\pa f(\x_0)}
\newcommand{\pef}{\pa f(\x)}
\newcommand{\dir}{D_{f,\x}(\w)}
\newcommand{\din}{D_{f,\x_0}(\w)}
\newcommand{\paw}{\pa f'_{\x_0}(\w)}
\newcommand{\fp}{f'_{\x_0}}
\newcommand{\ff}{\hat{f}_{\x_0}}
\newcommand{\ffx}{\hat{f}_{\x}}
\newcommand{\ffr}{f_{r,\x_0}}
\newcommand{\ffrx}{f_{r,\x}}
\newcommand{\la}{{\lambda}}
\newcommand{\lab}{{\bm{\lambda}}}
\newcommand{\eps}{\epsilon}
\newcommand{\om}{\omega}
\newcommand{\si}{\sigma}
\newcommand{\st}{\star}
\newcommand{\Fc}{\hat{{\mathcal{F}}}}
\newcommand{\Fch}{\hat{{\mathcal{F}}}}
\newcommand{\Tc}{\mathcal{T}}
\newcommand{\pa}{\partial}

\newcommand{\mub}{\boldsymbol{\mu}}
\newcommand{\sg}{\text{sgn}}
\newcommand{\hg}{\hat{g}}
\newcommand{\Du}{\text{dual}}
\newcommand{\cl}{\text{Cl}}
\newcommand{\tr}{\text{trace}}
\newcommand{\rk}{\text{rank}}
\newcommand{\cn}{\text{cone}}
\newcommand{\dbt}{\text{\bf{d}}}
\newcommand{\dt}{\text{{dist}}}
\newcommand{\dtR}{\dt_{\R^+}(\h)}
\newcommand{\vs}{\vspace}
\newcommand{\hs}{\hspace}
\newcommand{\has}{\hat{\s}}
\newcommand{\nn}{\nonumber}
\newcommand{\li}{\left<}
\newcommand{\ri}{\right>}
\newcommand{\nor}{\|\cdot\|}
\vspace{-10pt}
\begin{abstract} \vspace{-3pt}This paper considers the linear inverse problem where we wish to estimate a structured signal $\x_0$ from its corrupted observations. When the problem is ill-posed, it is natural to associate a convex function $f(\cdot)$ with the structure of the signal. For example, $\ell_1$ norm can be used for sparse signals. To carry out the estimation, we consider two well-known convex programs: 1) Second order cone program (SOCP), and, 2) Lasso. Assuming Gaussian measurements, we show that, 
if precise information about the value $f(\x_0)$ or the $\ell_2$-norm of the noise is available, one can do a particularly good job at estimation. In particular, the reconstruction error 
 becomes proportional to the ``sparsity'' of the signal rather than to the ambient dimension of the noise vector. We connect our results to the existing literature and provide a discussion on their relation to the standard least-squares problem. Our error bounds are non-asymptotic and sharp, they apply to arbitrary convex functions and do not assume any distribution on the noise.
 \vspace{4pt}
 \Keywords{sparse estimation, convex optimization, Lasso, structured signals, Gaussian width, model selection, linear inverse}
\end{abstract}

\section{Introduction}
\vspace{-3pt}
Second order cone programming (SOCP) and the Lasso are two common approaches to perform noise robust model fitting. They are often used for sparse approximation when the signal that underlies the observations is known to have few nonzero entries \cite{CanTao,WainLasso, OracleLAS,Tikh, BicRit,ZhaoLasso,DonLasso,Mon,BayMon,HighLAS}. This work considers the abstract model fitting problem where the signal has some sort of structure and we wish to estimate it from corrupted observations. To accomplish this, 
we use an abstract structure inducing convex function $f(\cdot)$. Let $\x_0\in\R^n$ be the true signal to be estimated. We observe $\y=\A\x_0+\z$ where $\A\in\R^{m\times n}$ is the measurement matrix and $\z$ is the noise vector. Let us now introduce the two problems mentioned above, the SOCP and the Lasso.
\vspace{-5pt}
\subsection{Lasso with exact side information}
Lasso is introduced by Tibshirani in \cite{Tikh}. The standard Lasso problem solves,
\beq\label{eq:LASSOgen}
\x_L^*=\arg\min_\x \la f(\x)+\frac{1}{2}\|\y-\A\x\|^2.
\eeq
In the program above and in the sequel, $\|\cdot\|$ is the $\ell_2$-norm. For the sake of this work, we assume that we know \emph{a priori} the value of the structure inducing function $f(\cdot)$ at $\x_0$. Under this information, we can simplify the problem to the following constrained setup,
\beq
\x_L^*=\arg\min_\x \frac{1}{2}\|\y-\A\x\|^2~~~\text{subject to}~~~f(\x)\leq f(\x_0)\label{LASSO}.
\eeq

\subsection{SOCP with exact side information}
SOCP is the name given to a class of algorithms. For linear inverse problems, a commonly used instance is the following \cite{CanTao},
\beq
\x_S^*=\arg\min_\x f(\x)~~~\text{subject to}~~~\|\y-\A\x\|\leq \delta.\nn
\eeq
Here $\delta$ is a known upper bound on the noise level $\|\z\|$. This ensures that the unknown  signal $\x_0$ is feasible for the SOCP. In this work, we will assume the exact information of $\|\z\|$ and solve,
\beq
\x_S^*=\arg\min_\x f(\x)~~~\text{subject to}~~~\|\y-\A\x\|\leq \|\z\|\label{SOCP}.
\eeq
\noindent In summary,
\begin{itemize}
\item Lasso will assume the knowledge about the signal, $f(\x_0)$. 
\item SOCP will assume the knowledge about the noise, $\|\z\|$.
\end{itemize}
We additionally assume that the sensing matrix $\A$ has independent zero-mean, $\frac{1}{m}$ variance Gaussian entries. Our main result provides non-asymptotic and sharp upper bounds on the estimation error terms $\|\x^*_L-\x_0\|$ and $\|\x^*_S-\x_0\|$. When $\x_0$ is a sparse vector and if we pick $f(\cdot)$ to be the $\ell_1$ norm, it is now well-known that the estimation error can be as small as $\|\z\|$. In this paper, we  restrict our attention to problems \eqref{LASSO} and \eqref{SOCP} and we try to answer the following three questions:
\begin{itemize}
\item[-] Can we generalize the results on $\ell_1$ norm to arbitrary convex functions?
\item[-] Can we give very sharp bounds with small and accurate constants?
\item[-] Can we do these non-asymptotically, i.e., for possibly very small number of measurements and/or sparsity levels?
\end{itemize}

\section{Result}
We will first state the general result and will consider specific examples later on. Let us  introduce the ``Gaussian width'' of a set. This concept is crucial for the statement of our results.
\begin{defn}[Gaussian width] Let $\Cc\in\R^n$ be a nonempty set. The Gaussian width of $\Cc$ is denoted by $\boldsymbol{\omega}(\Cc)$ and is defined as,
\beq
\boldsymbol{\omega}(\Cc)=\E\left[\sup_{\vb\in\Cc}\li\vb,\g\ri\right]\nn,
\eeq
where $\g\in\R^n$ has independent standard normal entries.
\end{defn}

Next, we require the definition of the tangent cone of a function $f(\cdot)$ at some $\x\in\mathbf{R}^n$. For this definition, let $\text{cone}(\cdot)$ and $\text{Cl}(\cdot)$ return the conic hull and the closure of a set, respectively.

\begin{defn}[Tangent cone] Assume $f(\cdot):\R^n\rightarrow\R$ and $\x\in\R^n$. Denote the set of descend directions $\{\vb\in\R^n\big| f(\x+\vb)\leq f(\x)\}$ by $D_f(\x)$. The tangent cone of $f(\cdot)$ at $\x$ is denoted by $T_f(\x)$ and defined as,
\beq
T_f(\x):=\text{Cl}(\text{cone}(D_f(\x))\nn.
\eeq
\end{defn}
Let $\Bc^{n-1}$ denote the unit $\ell_2$-ball in $\R^n$. For convenience, denote 
\begin{align}
\hat{T}_f(\x):=T_f(\x)\cap \Bc^{n-1}.\nn
\end{align}
  Finally, given a vector $\g\in\R^d$ with independent standard normal entries, we define $\gamma_d:=\E[\|\g\|]$. It is well known that $\gamma_d=\sqrt{2}\frac{\Gamma(\frac{d+1}{2})}{\Gamma(\frac{d}{2})}$ and $\sqrt{d}\geq \gamma_d\geq \frac{d}{\sqrt{d+1}}$ (see \cite{Cha}). This definition will simplify our notation in what follows. We are now ready to state our main result.

\begin{thm}\label{thm1}Consider the Lasso and SOCP problems in \eqref{LASSO} and \eqref{SOCP}, respectively. Assume $\z\in\R^m,\x_0\in\R^n$ are arbitrary and $\A\in\R^{m\times n}$ has independent $\Nn(0,\frac{1}{m})$ distributed entries. Assume $m\geq 2$ and $0\leq t\leq \gamma_m-\DC$. Then, with probability, $1-6\exp(-\frac{t^2}{26})$, we have,
\begin{align}
&\bullet~\|\x^*_L-\x\|\leq \eta(\x_0,t)\|\z\|\label{lasso1},\\
&\bullet~\|\x^*_S-\x\|\leq2 \eta(\x_0,t)\|\z\|\label{socp1},
\end{align}
where $\eta(\x_0,t)=\frac{\sqrt{m}}{\gamma_{m-1}}\frac{\DC+t}{\gamma_m-\DC-t}$.
\end{thm}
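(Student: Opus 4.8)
The plan is to analyze both constrained programs through the geometry of the tangent cone $\TAC$ and to invoke a Gaussian comparison argument. Write $\x_L^* = \x_0 + \w_L$ where $\w_L$ is the error vector. Since $\x_L^*$ is feasible, $f(\x_L^*) \le f(\x_0)$, so $\w_L \in D_f(\x_0) \subseteq T_f(\x_0)$. Optimality of $\x_L^*$ gives $\|\y - \A\x_L^*\| \le \|\y - \A\x_0\| = \|\z\|$, i.e. $\|\A\w_L - \z\| \le \|\z\|$. Expanding, this yields the key inequality $\|\A\w_L\|^2 \le 2\li \A\w_L, \z\ri \le 2\|\A\w_L\|\|\z\|$, hence $\|\A\w_L\| \le 2\|\z\|$, and more usefully $\|\A\w_L\| \le 2\li \A\w_L/\|\A\w_L\|, \z\ri$. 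The whole argument then reduces to two deterministic facts about $\A$ restricted to the cone: a lower bound on $\|\A\vb\|$ uniformly over $\vb \in \TAC$ (a ``restricted singular value'' bound) and an upper bound on $\sup_{\vb \in \TAC} \li \A\vb, \z\ri / \|\A\vb\|$ type quantity — really an upper bound on the correlation of $\A\w$ with $\z$ within the cone.

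The second step is to establish these two bounds with the stated probability. For the lower bound I would use Gordon's ``escape through a mesh''/minimum singular value theorem: for a Gaussian $\A$ with $\Nn(0,\tfrac1m)$ entries, $\min_{\vb \in \TAC} \|\A\vb\| \ge \tfrac{1}{\sqrt m}(\gamma_m - \DC - t)$ with probability at least $1 - \exp(-t^2/2)$, provided $\gamma_m - \DC - t \ge 0$ (this is exactly why the hypothesis $0 \le t \le \gamma_m - \DC$ appears). For the numerator, I would control $\sup_{\vb \in \TAC \cap \Sc^{n-1}} \li \A\vb, \bar\z \ri$ where $\bar\z = \z/\|\z\|$; conditioning on $\z$, the vector $\A^\top \bar\z$ is $\Nn(0,\tfrac1m \Iden_n)$, so this supremum concentrates around $\tfrac{1}{\sqrt m}\DCC = \tfrac{1}{\sqrt m}\DC$ with Gaussian deviation $t/\sqrt m$. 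Actually, a cleaner route that gets the $\gamma_{m-1}$ factor and the sharp constant is to apply Gordon's Gaussian min-max comparison inequality directly to the optimization $\min_{\vb} \max_{\ub} \cdots$ reformulation — but at the level of a proof sketch it suffices to say: combining a high-probability lower bound $\|\A\w_L\| \ge \tfrac{\gamma_{m-1}}{\sqrt m}\|\w_L\| \cdot \text{(something)}$ and an upper bound on the linear term gives, after rearranging $\|\A\w_L\| \le 2\li \A\w_L,\z\ri/\|\A\w_L\|$, the bound $\|\w_L\| \le \eta(\x_0,t)\|\z\|$.

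For the SOCP, set $\x_S^* = \x_0 + \w_S$. Feasibility gives $\|\A\w_S - \z\| \le \|\z\|$ just as before, and optimality gives $f(\x_S^*) \le f(\x_0)$ (since $\x_0$ is feasible for the SOCP as $\|\y-\A\x_0\|=\|\z\|$), so again $\w_S \in T_f(\x_0)$. The only difference is in extracting the bound: here one uses $\|\A\w_S\| \le 2\|\z\|$ directly together with the restricted-singular-value lower bound $\|\A\w_S\| \ge \tfrac{\gamma_{m-1}}{\sqrt m}\cdot\tfrac{\gamma_m - \DC - t}{\gamma_{m}}\|\w_S\|$-type estimate, which loses the extra factor of $2$ relative to the Lasso — accounting for the $2\eta$ versus $\eta$ in \eqref{socp1}. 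The union bound over the two or three ``bad'' events (restricted singular value too small, linear term too large, and a concentration event for $\|\A\w\|$) gives the $6\exp(-t^2/26)$ failure probability; the constant $26$ rather than $2$ comes from the need to union-bound several Lipschitz-concentration events and from slack absorbed in converting between $\gamma_m$ and $\sqrt m$.

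The main obstacle is getting the \emph{sharp} constants — in particular the exact factor $\tfrac{\sqrt m}{\gamma_{m-1}}$ and the tight denominator $\gamma_m - \DC - t$ — rather than some loose $O(1)$ constant. A naive chaining/restricted-isometry argument gives the right order but not the sharp leading term. The key technical device that makes the constants exact is Gordon's Gaussian min-max theorem (the ``Gaussian comparison'' inequality), applied to a suitable min-max reformulation of the error, rather than black-box escape-through-a-mesh; carefully tracking how the $\gamma_{m-1}$ arises (from the $\ell_2$ norm of an $(m-1)$-dimensional Gaussian appearing after one direction is ``used up'' by the constraint) is the delicate part. Everything else — the feasibility/optimality inequalities, the reduction to the tangent cone, and the final algebraic rearrangement — is routine.
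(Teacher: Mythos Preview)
Your high-level architecture is right—reduce to a restricted-singular-value lower bound plus a correlation bound between $\z$ and the image cone $\Tco$, and use Gordon for both—but two concrete steps fail to deliver the stated constants.

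First, the deterministic reductions are off. For the Lasso, expanding $\|\A\w_L-\z\|^2\le\|\z\|^2$ as you do gives $\|\A\w_L\|\le 2\li \A\w_L/\|\A\w_L\|,\z\ri\le 2\|\bu(\z,\Tco)\|$, which is a factor of $2$ too loose: followed through, it yields $2\eta(\x_0,t)\|\z\|$ for the Lasso, not $\eta(\x_0,t)\|\z\|$. The paper removes this factor by a different argument: if $\|\A\w^*_L\|>\|\bu(\z,\Tco)\|$, then the rescaling $\w'=(\|\bu(\z,\Tco)\|/\|\A\w^*_L\|)\,\w^*_L$ is feasible and strictly decreases $\|\y-\A(\x_0+\cdot)\|$, a contradiction; hence $\|\A\w^*_L\|\le\|\bu(\z,\Tco)\|$ with no factor $2$. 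Your SOCP treatment is worse: using $\|\A\w_S\|\le 2\|\z\|$ and only the restricted singular value gives $\|\w_S\|\le \frac{2\sqrt m\,\|\z\|}{\gamma_m-\DC-t}$, which is exactly the Chandrasekaran--Recht--Parrilo--Willsky bound the paper explicitly sets out to improve (it is missing the crucial $(\DC+t)/\gamma_{m-1}$ factor). The correct deterministic inequality is $\|\A\w^*_S\|\le 2\|\bu(\z,\Tco)\|$, obtained by decomposing $\z$ along $\Tco$ and its polar. So the factor-of-$2$ gap between \eqref{lasso1} and \eqref{socp1} does not arise where you say; it is that the rescaling/contradiction trick works for Lasso but not for SOCP.

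Second, the probabilistic step you propose controls the wrong quantity. You correctly identify the target as $\sup_{\vb}\li \A\vb,\z\ri/\|\A\vb\|=\|\bu(\z,\Tco)\|$, but your concrete argument—``$\A^T\bar\z\sim\Nn(0,\tfrac1m\Iden_n)$, so the supremum concentrates at $\tfrac{1}{\sqrt m}\DC$''—bounds the \emph{un-normalized} $\sup_{\vb\in\TAC}\li \A\vb,\bar\z\ri$, which differs from the target by precisely the $\|\A\vb\|$ in the denominator. That missing normalization is what produces the $\gamma_{m-1}$ and is the entire difficulty. The paper's route is to write $\alpha\|\Gb\vb\|-\z^T\Gb\vb=\max_{\ub\in\alpha\Sc^{m-1}-\z}\ub^T\Gb\vb$, apply Gordon's min--max comparison to this saddle problem, and then make a specific choice of the dual variable that splits off the component of the auxiliary Gaussian $\h$ orthogonal to $\z$ (this is where $\gamma_{m-1}$ rather than $\gamma_m$ enters, and where the constants $26$ and $5$ in the probability are generated). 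You gesture at Gordon as ``the cleaner route'' but do not identify this $\min$--$\max$ reformulation, which is the technical heart of the proof.
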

\noindent{\bf{Remark 1:}} Observing that $\gamma_{m-1}\gamma_m=m-1$ and $\gamma_{m-1}\leq\sqrt{m-1}$ leads to the bound, $\eta(\x_0,t)\leq \ratio\frac{\DC+t}{\sqrt{m-1}-\DC-t}$.

\noindent {\bf{Remark 2:}} In Theorem \ref{thm1}, we require $\gamma_m\geq \DC$. It has been shown that, this is indeed necessary, \cite{NoiseSense,McCoy}. When $\gamma_m<\DC$, it is futile to expect noise robustness, as one cannot perfectly recover $\x_0$ from \emph{noiseless} observations $\y=\A\x$ (cf. Theorem 3.4 of \cite{Oym}).

Our bound is \emph{only} in terms of the Gaussian width; which has been the subject of several works \cite{Mackey,Oym,Cha,McCoy,TotVar,SquareDeal}. This makes it possible to apply Theorem \ref{thm1} for specific choices of $f(\cdot)$ and $\x_0$ previously studied in the literature.


\section{State-of-the-art applications}

We will now state our results for specific signal choices by making use of the existing results in the literature that compute upper bounds on the Gaussian width term $\DC$.

\noindent $\bullet$ {\bf{Sparse signals:}} When $\x_0$ is a $k$-sparse signal and $f(\cdot)$ is the $\ell_1$ norm, we have $\Delxf\leq \sqrt{2k\log\frac{2n}{k}}$, \cite{Cha}. Hence, we have the following.

\begin{cor} \label{for sparse}Suppose $\x_0$ is a $k$-sparse signal and $0\leq t\leq \sqrt{m-1}-\sqrt{2k\log\frac{2n}{k}}$. Pick $f(\cdot)$ to be the $\ell_1$ norm. Then, with probability $1-6\exp(-\frac{t^2}{26})$,
\beq
\|\x^*_L-\x_0\|\leq \|\z\|\ratio\frac{\sqrt{2k\log\frac{2n}{k}}+t}{\sqrt{m-1}-{\sqrt{2k\log\frac{2n}{k}}}-t}.\nn
\eeq
\end{cor}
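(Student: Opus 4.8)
The plan is to obtain the corollary as a direct specialization of Theorem~\ref{thm1} to the choice $f=\|\cdot\|_1$, feeding in the known Gaussian width estimate for the descent cone of the $\ell_1$ norm at a $k$-sparse point. First I would recall from \cite{Cha} that $\DC\le\sqrt{2k\log\frac{2n}{k}}$; this is the only signal-dependent quantity entering Theorem~\ref{thm1}. Next I would verify that the hypotheses of Theorem~\ref{thm1} hold over the stated range of $t$: since $\gamma_m\ge\frac{m}{\sqrt{m+1}}\ge\sqrt{m-1}$ — the second inequality reducing, after squaring and clearing denominators, to $m^2\ge m^2-1$ — the assumption $0\le t\le\sqrt{m-1}-\sqrt{2k\log\frac{2n}{k}}$ forces $0\le t\le\gamma_m-\DC$ (in particular this range is nonempty only when $\sqrt{2k\log\frac{2n}{k}}\le\sqrt{m-1}$, which is the implicit regime of interest). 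Hence Theorem~\ref{thm1} applies and produces the event of probability $1-6\exp(-t^2/26)$ on which \eqref{lasso1} holds.

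On that event I would combine \eqref{lasso1} with Remark~1 to get $\|\x^*_L-\x_0\|\le\ratio\,\frac{\DC+t}{\sqrt{m-1}-\DC-t}\,\|\z\|$, and then it only remains to replace $\DC$ by its upper bound. Writing $c=\sqrt{m-1}$, the map $u\mapsto\frac{u+t}{c-u-t}$ has derivative $\frac{c}{(c-u-t)^2}>0$ on $[0,c-t)$, so it is increasing there; and the range restriction on $t$ guarantees that $\sqrt{2k\log\frac{2n}{k}}$ lies in $[0,c-t)$, so the denominator $\sqrt{m-1}-\sqrt{2k\log\frac{2n}{k}}-t$ stays strictly positive. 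Substituting $\DC\le\sqrt{2k\log\frac{2n}{k}}$ therefore only enlarges the right-hand side and yields exactly the claimed inequality.

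There is no genuine analytic obstacle: the sole substantive ingredient is the width bound $\DC\le\sqrt{2k\log\frac{2n}{k}}$, imported from \cite{Cha} and not reproved, while the rest is monotonicity bookkeeping. The one thing that needs care is the admissible range of $t$, which must be chosen so that every denominator encountered along the chain — in $\eta(\x_0,t)$, in the bound of Remark~1, and after the final substitution — remains positive, and so that we do not leave the scope of Theorem~\ref{thm1}; the hypothesis $0\le t\le\sqrt{m-1}-\sqrt{2k\log\frac{2n}{k}}$ is precisely what makes all of these conditions hold simultaneously.
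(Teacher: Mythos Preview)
Your proposal is correct and matches the paper's own approach: the paper states the corollary as an immediate consequence of Theorem~\ref{thm1} (via Remark~1) together with the width bound $\Delxf\le\sqrt{2k\log\frac{2n}{k}}$ from \cite{Cha}, without spelling out any further details. Your write-up simply fills in the bookkeeping the paper omits---checking $\gamma_m\ge\sqrt{m-1}$ so the range of $t$ is admissible, and verifying the monotonicity needed to replace $\DC$ by its upper bound.
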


\noindent $\bullet$ {\bf{Low-rank matrices:}} Nuclear norm (sum of the singular values) is the standard choice to encourage a low-rank solution. Suppose $\x_0$ is a rank-$r$ matrix of size $d\times d$. For this choice, it is known that $\Delxf\leq \sqrt{3r(2d-r)}$, \cite{Cha}.
\begin{cor} Suppose $\x_0\in\R^{d\times d}$ is a rank-$r$ matrix and $0\leq t\leq \sqrt{m-1}-\sqrt{3r(2d-r)}$. Pick $f(\cdot)$ to be the nuclear norm. Then, with probability $1-6\exp(-\frac{t^2}{26})$,
\beq
\|\x^*_L-\x_0\|\leq \|\z\|\ratio\frac{\sqrt{3r(2d-r)}+t}{\sqrt{m-1}-\sqrt{3r(2d-r)}-t}.\nn
\eeq
\end{cor}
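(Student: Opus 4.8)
The plan is to reduce the problem to the general bound of Theorem~\ref{thm1} by computing (or rather, invoking a known upper bound on) the Gaussian width $\DC$ for the nuclear norm at a rank-$r$ point, and then substituting into the inequality of Remark~1. Since Theorem~\ref{thm1} is stated for arbitrary convex $f$ and arbitrary $\x_0$, the only work specific to this corollary is bounding $\boldsymbol{\omega}(\hat T_f(\x_0))$ when $f=\|\cdot\|_*$ is the nuclear norm on $\R^{d\times d}$ and $\x_0$ has rank $r$.

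The first step is therefore to quote the estimate $\boldsymbol{\omega}(\hat T_f(\x_0)) \le \sqrt{3r(2d-r)}$ from \cite{Cha}; this is exactly the Gaussian-width bound for the descent cone of the nuclear norm at a rank-$r$ matrix, and it is already cited in the paragraph preceding the statement. The second step is to check that the hypotheses of Theorem~\ref{thm1} are met: we need $m\ge 2$ and $0\le t\le \gamma_m-\DC$. The corollary instead imposes $0\le t\le \sqrt{m-1}-\sqrt{3r(2d-r)}$, so I must verify that this is a (slightly) stronger requirement, i.e.\ that $\sqrt{m-1}-\sqrt{3r(2d-r)} \le \gamma_m - \DC$. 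Since $\DC \le \sqrt{3r(2d-r)}$ and, by the bounds recalled just before Theorem~\ref{thm1}, $\gamma_m \ge \frac{m}{\sqrt{m+1}} \ge \sqrt{m-1}$ for $m\ge 2$, the chain $\sqrt{m-1}-\sqrt{3r(2d-r)} \le \gamma_m - \DC$ follows; hence the stated range of $t$ is admissible and Theorem~\ref{thm1} applies with probability at least $1-6\exp(-t^2/26)$.

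The third step is purely algebraic: apply \eqref{lasso1}, which gives $\|\x_L^*-\x_0\|\le \eta(\x_0,t)\|\z\|$ with $\eta(\x_0,t)=\frac{\sqrt m}{\gamma_{m-1}}\,\frac{\DC+t}{\gamma_m-\DC-t}$, and then use Remark~1, $\eta(\x_0,t)\le \ratio\,\frac{\DC+t}{\sqrt{m-1}-\DC-t}$. Finally, substitute the upper bound $\DC\le\sqrt{3r(2d-r)}$; since the right-hand side $\frac{\DC+t}{\sqrt{m-1}-\DC-t}$ is increasing in $\DC$ (as $\sqrt{m-1}-\DC-t>0$ on the admissible range of $t$), replacing $\DC$ by the larger quantity $\sqrt{3r(2d-r)}$ only weakens the inequality, yielding precisely the claimed bound
\[
\|\x_L^*-\x_0\|\le \|\z\|\,\ratio\,\frac{\sqrt{3r(2d-r)}+t}{\sqrt{m-1}-\sqrt{3r(2d-r)}-t}.
\]

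The only genuine mathematical content is the Gaussian-width estimate for the nuclear-norm descent cone, and that is not proved here — it is imported from \cite{Cha}. So there is no real obstacle in this corollary; the one point that requires a moment's care is the monotonicity argument in the last step (one must confirm the denominator stays positive so that enlarging $\DC$ indeed enlarges the fraction) together with the comparison of the two ranges of $t$ via $\gamma_m\ge\sqrt{m-1}$. Everything else is direct substitution into Theorem~\ref{thm1} and Remark~1.
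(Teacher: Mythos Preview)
Your proposal is correct and follows exactly the approach intended by the paper: invoke the Gaussian-width bound $\boldsymbol{\omega}(\hat T_f(\x_0))\le\sqrt{3r(2d-r)}$ from \cite{Cha} and substitute into Theorem~\ref{thm1} via Remark~1. The paper itself gives no separate proof for this corollary; your additional care in verifying $\gamma_m\ge\sqrt{m-1}$ and the monotonicity of the bound in $\DC$ simply makes explicit the routine checks that the paper leaves implicit.
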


\begin{figure}
  \begin{center}
{\includegraphics[scale=0.23]{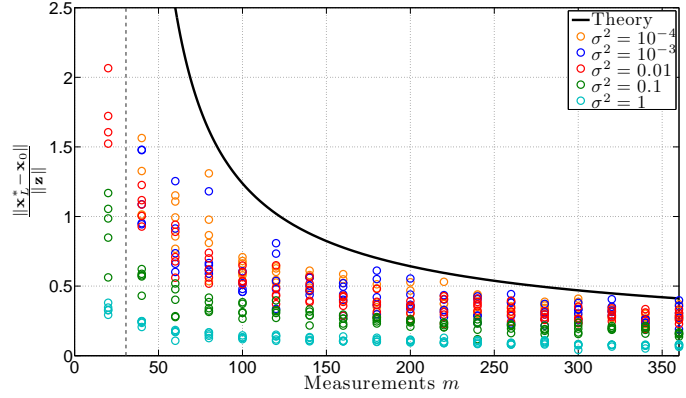}}
  \end{center}
  \caption{\footnotesize{We considered the sparse signal recovery setup of Corollary \ref{for sparse}. We set $n=500$, $k=5$ and varied $m$ from $0$ to $360$. Nonzero entries of $\x_0$ is generated with $\Nn(0,1)$ and then normalized to ensure unit norm. $\z$ and $\A$ has $\Nn(0,\sigma^2)$ and $\Nn(0,\frac{1}{m})$ entries respectively. Dashed line corresponds to the phase transition line $m=\Delxf^2$.}}
\label{figCont1}
\end{figure}

\noindent $\bullet$ {\bf{Block-sparse signals:}} Suppose the entries of $\x_0$ can be partitioned into $q$ known blocks of size $b$ and only $k$ of these $q$ blocks are nonzero. The standard function to encourage block-sparsity is the $\ell_{1,2}$ norm, which sums the $\ell_2$ norms of the individual blocks. For this choice, it is known that $\Delxf\leq \sqrt{4k(b+\log\frac{q}{k})}$, \cite{Mackey}.
\begin{cor} Suppose $\x_0\in\R^{qb}$ is a $k$-block-sparse signal and $0\leq t\leq \sqrt{m-1}-\sqrt{4k(b+\log\frac{q}{k})}$. Pick $f(\cdot)$ to be the $\ell_{1,2}$ norm. Then, with probability $1-6\exp(-\frac{t^2}{26})$,
\beq
\|\x^*_L-\x_0\|\leq \|\z\|\ratio\frac{\sqrt{4k(b+\log\frac{q}{k})}+t}{\sqrt{m-1}-\sqrt{4k(b+\log\frac{q}{k})}-t}.\nn
\eeq
\end{cor}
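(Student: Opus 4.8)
The plan is to derive this corollary purely as a specialization of Theorem~\ref{thm1}; there is no new probabilistic argument, only the substitution of a known Gaussian-width estimate together with an elementary monotonicity observation. First I would invoke Remark~1, which recasts the bound \eqref{lasso1} as
\beq
\|\x^*_L-\x_0\|\leq \|\z\|\ratio\frac{\DC+t}{\sqrt{m-1}-\DC-t},\nn
\eeq
valid with probability at least $1-6\exp(-t^2/26)$ whenever $m\geq 2$ and $0\leq t\leq\gamma_m-\DC$. Writing $w:=\sqrt{4k(b+\log\frac{q}{k})}$, the second step is to recall from \cite{Mackey} that for the $\ell_{1,2}$ norm at a $k$-block-sparse $\x_0$ one has $\DC=\boldsymbol{\omega}(\hat{T}_f(\x_0))\leq w$.

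Next I would argue that replacing $\DC$ by the larger quantity $w$ only weakens the bound. The relevant scalar map is $g(u)=\frac{u+t}{\sqrt{m-1}-u-t}$, which equals $-1+\frac{\sqrt{m-1}-t}{\sqrt{m-1}-u-t}$ and is therefore nondecreasing in $u$ on $[0,\sqrt{m-1}-t)$. Since the hypothesis $t\leq\sqrt{m-1}-w$ ensures that both $\DC$ and $w$ lie in this interval, we get $g(\DC)\leq g(w)$, and multiplying through by $\|\z\|\ratio$ produces exactly the displayed bound of the corollary.

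Finally I would verify that the admissible range of $t$ in the corollary is contained in the range required by Theorem~\ref{thm1}. From the identities $\gamma_{m-1}\gamma_m=m-1$ and $\gamma_{m-1}\leq\sqrt{m-1}$ noted before the theorem, $\gamma_m=(m-1)/\gamma_{m-1}\geq\sqrt{m-1}$, so $\gamma_m-\DC\geq\sqrt{m-1}-w\geq t$; moreover $w>0$ forces $m\geq 2$. Hence all hypotheses of Theorem~\ref{thm1} are met and the corollary follows. I do not anticipate a genuine obstacle: the only points needing care are the monotonicity step that licenses the substitution $\DC\mapsto w$ and the inequality $\gamma_m\geq\sqrt{m-1}$ certifying the range of $t$ --- the real content, namely the width estimate and the error bound of Theorem~\ref{thm1}, is imported wholesale.
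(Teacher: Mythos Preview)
Your proposal is correct and mirrors exactly the paper's implicit derivation: the corollary is obtained by inserting the width bound $\Delxf\leq\sqrt{4k(b+\log\frac{q}{k})}$ from \cite{Mackey} into Theorem~\ref{thm1} via Remark~1, together with the obvious monotonicity in the width parameter. (One harmless slip: your rewriting of $g(u)$ should read $-1+\frac{\sqrt{m-1}}{\sqrt{m-1}-u-t}$ rather than $-1+\frac{\sqrt{m-1}-t}{\sqrt{m-1}-u-t}$, but the monotonicity conclusion is unaffected.)
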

\noindent $\bullet$ {\bf{Other low-dimensional models}}: There are increasingly more signal classes that exhibit low-dimensionality and to which our results would apply. Some of these are as follows.
\begin{itemize}
\item Non-negativity constraint: $\x_0$ has non-negative entries, \cite{NonNeg}.
\item Low-rank plus sparse matrices: $\x_0$ can be represented as sum of a low-rank and a sparse matrix, \cite{LPS}.
\item Signals with sparse gradient: Rather than $\x_0$, its gradient ${\bf{d}}_{\x_0}(i)=\x_0(i)-\x_0(i-1)$ is sparse, \cite{TotVar}.
\item Low-rank tensors: $\x_0$ is a tensor and its unfoldings are low-rank matrices (see \cite{SquareDeal,tensorBen}).
\item Simultaneously sparse and low-rank matrices: For instance, $\x_0=\s\s^T$ for a sparse vector $\s$, \cite{OymSimult,Emile}.
\end{itemize} 
For more examples, the reader is referred to \cite{Cha,Negah,Oym,McCoy}.

\section{Interpretation of the results}
We will now argue that, one can easily interpret our results when the system $\y=\A\x_0+\z$ is seen as an $m\times \Delxf^2$ system rather than $m\times n$.
\subsection{Comparison to least squares} \label{least squares}
Consider the least-squares problem where one simply solves, 
\beq
\min_\x\|\y-\A\x\|\label{least squares}.
\eeq
It is clear that when $m<n$, \eqref{least squares} is hopeless and when $m>n$ and $\A$ has i.i.d. entries, $\A$ becomes full rank and the solution is $\x^*=(\A^T\A)^{-1}\A^T\y$. Hence, denoting the projection of $\z$ onto the range space of $\A$ by $\bu(\z,\text{Range}(\A))$ and the minimum singular value of $\A$ by $\sigma_{min}(\A)$,
\beq
\|\x^*-\x_0\|^2=\z^T\A(\A^T\A)^{-2}\A^T\z\leq \left(\frac{\|\bu(\z,\text{Range}(\A))\|}{\sigma_{min}(\A)}\right)^2,\label{inversion}
\eeq
It is well known that, when $\A$ has $\Nn(0,\frac{1}{m})$ entries, $\sigma_{min}(\A)\approx 1-\sqrt{\frac{n}{m}}$, \cite{Vers}. Also, since the range space is generated uniformly at random, $\|\bu(\z,\text{Range}(\A))\|\approx \sqrt{\frac{n}{m}}\|\z\|$. Consequently,
\beq
\|\x^*-\x_0\|\lesssim \|\z\|\frac{\sqrt{n}}{\sqrt{m}-\sqrt{n}}\label{ls1}.
\eeq
So, what is the relation between \eqref{ls1} and \eqref{lasso1}? Ignoring the $t$'s and using $\gamma_m\approx\sqrt{m}$ in \eqref{lasso1} , we find,
\beq
\|\x^*_L-\x\|\lesssim \|\z\|\frac{\DC}{\sqrt{m}-\DC}\label{lasso2}.
\eeq
One can move from \eqref{lasso2} to \eqref{ls1} by simply replacing the $\DC$ terms with $\sqrt{n}$. This indeed indicates that the Lasso and SOCP problems behave as $m\times \Delxf^2$ systems rather than $m\times n$ ones.
\subsection{Comparison to related works}
\noindent {\bf{Sparse recovery:}} 
A classical result states 
 that, when $\x_0$ is a sparse signal and when $\A$ has independent $\Nn(0,\frac{1}{m})$ entries the Lasso estimation error obeys $\order{\|\z\|\sqrt{\frac{{k\log n}}{m}}}$ when $m=\Omega({k\log\frac{n}{k}})$, \cite{BicRit,CandesDantzig,ellp,Negah,squareRoot}. Our bound given in Corollary \ref{for sparse} is fully consistent with this, however, we provide very small and accurate constants. 
In particular,  the phase transition occurring  around 
$2k\log\frac{2n}{k}$
  number of measurements shows up explicitly in our bound in Corollary \ref{for sparse} (see the term $\sqrt{m-1}-\sqrt{2k\log\frac{2n}{k}}$ in  the  denominator).\\
 

 
%
%
%



\noindent {\bf{Generalized linear inverse problems:}} 
Close to the present paper is the work due to \cite{Cha}. In \cite{Cha}, Chandrasekaran et al. perform error analysis of  the SOCP problem. Their result (cf. Corollary 3.3 in \cite{Cha}) shows that with probability $1-\exp(-\frac{1}{2}t^2)$,
\begin{align}\label{eq:ven}
\|\x_S^*-\x\| \leq 2\sqrt{m}\frac{\|\z\|}{\gamma_m-\DC-t}.
\end{align}
Our approach is related; however, we provide a more careful analysis. As a result of this, and in contrast to the error bound in \eqref{eq:ven} which grows linearly with the noise level $\|\z\|$, our bound \eqref{lasso1} is scaled by a constant factor of $\frac{\Delxf}{\sqrt{m}}$. This is due to the fact that we are able to carefully remove a significant component of the noise which cannot contribute to the error term.
\\

\noindent {\bf{Sharp error bounds for the Lasso estimator:}} There has been significant research interest in characterizing the error performance of the Lasso estimators.
\cite{Negah} provides a unified analysis of the error performance of the Lasso estimator \eqref{eq:LASSOgen}, which can be specialized to many regularizer functions.  More recent works establish \emph{sharper} bounds for the Lasso estimation error. In \cite{Mon,BayMon,NoiseSense}; Bayati, Montanari and Donoho provide explicit characterizations in an \emph{asymptotic} setting for $f(\cdot)=\|\cdot\|_{1}$. Closer in nature to the present paper, are the works \cite{StoLASSO} and \cite{Oym}.
The author in \cite{StoLASSO} analyzes the Lasso problem \eqref{LASSO} with prior information on $f(\x_0)$ when $f(\cdot)=\|\cdot\|_1$. \cite{Oym} generalizes the \emph{precise} analysis to arbitrary convex functions and, most importantly, extends it to penalized Lasso problems of the form \eqref{eq:LASSOgen}. Although tighter, the bounds in \cite{Oym} require stronger assumptions than ours, namely, an i.i.d. Gaussian noise vector $\z$ and an asymptotic setting where $m$ and $\Delxf$ is large enough. Their results translates to our framework as,
\beq
\|\x^*_L-\x\|\lesssim \|\z\|\frac{\DC}{\sqrt{m-\DC^2}}\label{lasso3}.
\eeq
The difference between \eqref{lasso2} and \eqref{lasso3} is in the denominator. $\sqrt{m-\DC^2}\geq \sqrt{m}-\DC$ for all regimes of $0\leq \DC^2<m$. The contrast becomes significant when $m\approx \DC^2$. In particular, setting $m=(1+\eps)^2\DC^2$, we have,
\beq
\frac{\sqrt{m-\DC^2}}{\sqrt{m}-\DC}=\frac{\sqrt{2\eps+\eps^2}}{\eps}=\sqrt{\frac{2}{\eps}+1}\nn.
\eeq
In summary, when $\eps$ is large, the bounds of this paper are as good as those of \cite{Oym,StoLASSO,BayMon,Mon}. When $\eps$ is small, they can be arbitrarily worse. Simulation results (see Figure \ref{figCont1}) verify that the error bounds of Theorem \ref{thm1} become sharp for large number of measurements $m$.  This difference can be intuitively explained by considering the least-squares error in \eqref{inversion}. There, using $\sigma_{min}(\A)$ as an upper bound results in a looser bound. For a vector $\z$ independent of $\A$, we actually have, 
\beq
\frac{n}{m-n}\|\z\|^2\approx \z^T\A(\A^T\A)^{-2}\A^T\z< \left(\frac{\|\bu(\z,\text{Range}(\A))\|}{\sigma_{min}(\A)}\right)^2\approx \left(\frac{\sqrt{n}}{\sqrt{m}-\sqrt{n}}\|\z\|\right)^2\nn.
\eeq
In this sense, \cite{Oym} considers the precise behavior of the left-hand side in \eqref{inversion} and we consider the looser bound given in the right-hand side; which makes use of the minimum singular value $\sigma_{min}(\A)$.

\section{Further remarks}
\subsection{Do we need the exact side information?}
For our results, we either assumed knowledge about the signal $f(\x_0)$, or knowledge about the noise $\|\z\|$. It is desirable to not be dependent on such quantities. A natural way to break this dependence is by using the following program,
\beq
\min_{\x} \la f(\x)+\frac{1}{2}\|\y-\A\x\|^2\label{real lasso}.
\eeq
When $f(\cdot)$ is the $\ell_1$ norm and $\x_0$ is a sparse signal, the problem becomes the original Lasso program introduced by \cite{Tikh} and it has been analyzed in great depth \cite{WainLasso, OracleLAS,Tikh, BicRit,ZhaoLasso}.
 Closer to us, Bayati and Montanari analyzes the precise noise characteristics of \eqref{real lasso} in \cite{BayMon,Mon}. Analysis of \eqref{real lasso} for the block-sparse signals and low rank matrices can be found in \cite{blocklasso,Koltc}. However, to the best of our knowledge, the existing guarantees are optimal up to a constant; while our bounds are almost exact.

While we leave the analysis of \eqref{real lasso} to a future work, we should emphasize that, \cite{Oym} proposed using,
\beq
\la=\frac{\|\z\|}{\sqrt{m}}\tau^*\sqrt{1-\frac{\Delxf^2}{m}},\label{best choice}\nn
\eeq
as the penalty parameter in \eqref{real lasso} and argued (non rigorously) that \eqref{real lasso} performs as good as \eqref{LASSO} with this choice. Here $\tau^*=\arg\min_{\tau\geq 0}\E[\dt(\g,\tau\paf)^2]$ where $\g\sim\Nn(0,\Iden_n)$ and $\dt(\g,\tau\paf)$ is the $\ell_2$-distance of the vector $\g$ to the $\tau$-scaled subdifferential $\tau\paf$. Similar choices has been proposed by various works for sparse recovery, \cite{BicRit,BayMon,Mon,WainLasso}. For sparse signals or low-rank matrices, $\tau^*$ only depends on sparsity (or rank) of the signal and has been the topic of several works \cite{Cha,Oym,McCoy,Mackey,BayMon,Mon}.
\subsection{Adversarial noise}

We will now consider the scenario where one has adversarial noise, i.e., noise has the information of the sensing matrix $\A$ and can adapt itself accordingly. In this case, the reconstruction error can become significantly worse. The following proposition illustrates this for the Lasso problem \eqref{LASSO}.
\begin{propo}\label{propo1} Assume $\x_0$ is not a minimizer of $f(\cdot)$. Then, given $\A\in\R^{m\times n}$ with independent $\Nn(0,\frac{1}{m})$ entries, with probability $1-\exp(-\frac{t^2}{2})$, there exists a noise vector $\z\in\R^m$ and Lasso optimum $\x^*_L$ such that, 
\beq
\|\x^*_L-\x_0\|\geq  \frac{\sqrt{m}}{\gamma_m+t}\|\z\|\nn
\eeq

\end{propo}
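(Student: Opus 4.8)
The plan is to exhibit an explicit adversarial noise vector $\z$ for which the Lasso \eqref{LASSO} is forced to make a large error, and then to lower-bound that error using a singular-value estimate for $\A$. Since $\x_0$ is not a minimizer of $f(\cdot)$, there exists a nonzero descent direction $\vb\in T_f(\x_0)$; by scaling we may take $\vb\in\hat T_f(\x_0)$, i.e.\ $\|\vb\|=1$. The idea is to pick the noise so that $\x_0-\vb$ (or a suitable scaling thereof) is at least as good as $\x_0$ for the Lasso objective, which then prevents $\x^*_L$ from being close to $\x_0$. Concretely, I would set $\z = -\A\vb$, so that $\y=\A\x_0+\z=\A(\x_0-\vb)$. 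Then the candidate point $\x=\x_0-\vb$ satisfies $\|\y-\A\x\|=0$ and, because $\vb$ is a descent direction, $f(\x_0-\vb)\le f(\x_0)$, so $\x_0-\vb$ is feasible for \eqref{LASSO} and achieves the minimal possible value $0$ of the residual. Hence there is a Lasso optimum $\x^*_L$ with $\A\x^*_L=\A(\x_0-\vb)$, i.e.\ $\x^*_L-(\x_0-\vb)\in\Nc(\A)$; taking $\x^*_L=\x_0-\vb$ gives $\|\x^*_L-\x_0\|=\|\vb\|=1$.

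Next I would relate this error to $\|\z\|=\|\A\vb\|$. We need $\|\x^*_L-\x_0\| = 1 \ge \frac{\sqrt m}{\gamma_m+t}\|\A\vb\|$, i.e.\ $\|\A\vb\|\le \frac{\gamma_m+t}{\sqrt m}$. Since $\vb$ is a fixed unit vector and $\A$ has i.i.d.\ $\Nn(0,\tfrac1m)$ entries, $\A\vb$ is an $\Nn(0,\tfrac1m\Iden_m)$ vector, so $\sqrt m\,\|\A\vb\|$ is distributed as $\|\g\|$ with $\g\sim\Nn(0,\Iden_m)$, which has mean $\gamma_m$. Standard Gaussian concentration of the norm (Lipschitz concentration, as already used implicitly in Theorem \ref{thm1}) gives $\Pro\big(\|\g\|\ge \gamma_m+t\big)\le \exp(-t^2/2)$. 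Therefore with probability at least $1-\exp(-t^2/2)$ we have $\sqrt m\,\|\A\vb\|\le \gamma_m+t$, i.e.\ $\|\z\|\le \frac{\gamma_m+t}{\sqrt m}$, and combining with $\|\x^*_L-\x_0\|=1$ yields exactly the claimed bound $\|\x^*_L-\x_0\|\ge \frac{\sqrt m}{\gamma_m+t}\|\z\|$.

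One subtlety worth addressing carefully: the adversary is allowed to see $\A$, so there is no issue choosing $\vb$ and hence $\z$ depending on $\A$; but the concentration step needs $\vb$ to be a fixed vector (or at least independent of the randomness we are concentrating over). This is fine because $\vb\in\hat T_f(\x_0)$ can be chosen purely from $f$ and $\x_0$, with no reference to $\A$, and then $\A\vb$ is a genuine Gaussian vector. A second point is that the optimizer $\x^*_L$ need not be unique; the statement only asserts existence of \emph{some} optimum with large error, and the point $\x_0-\vb$ is such an optimum by construction, so this is not an obstacle. The main (though modest) obstacle is simply packaging the concentration inequality with the right constant to land on the stated $\exp(-t^2/2)$ tail; since $x\mapsto\|x\|$ is $1$-Lipschitz and the entries have variance $1/m$, the Gaussian concentration constant works out cleanly, and no union bound is needed because a single fixed direction $\vb$ suffices.
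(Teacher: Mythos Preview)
Your approach is essentially the paper's: both construct an adversarial noise by setting $\z$ equal to $\A$ times a feasible descent direction, so that the corresponding shifted point is a zero-residual Lasso optimum, and then use Gaussian concentration of the $\ell_2$ norm to bound $\|\z\|$. The paper simply takes the specific direction $\x^\ast-\x_0$ with $\x^\ast\in\arg\min f$, while you allow any descent direction; the probabilistic step is identical.

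Two small corrections. First, there is a sign slip: by the paper's definition, $\vb\in D_f(\x_0)$ means $f(\x_0+\vb)\le f(\x_0)$, not $f(\x_0-\vb)\le f(\x_0)$. You should set $\z=\A\vb$ (so $\y=\A(\x_0+\vb)$) and take $\x^*_L=\x_0+\vb$; the rest of your argument then goes through unchanged since $\|\z\|=\|\A\vb\|$ either way. Second, ``by scaling we may take $\|\vb\|=1$'' is not always possible: for convex $f$ you can scale a descent direction down but not necessarily up (e.g.\ $f(\x)=\|\x\|^2$ with $\|\x_0\|<1/2$ has no unit-norm descent direction). This does not actually matter, since the inequality $\|\x^*_L-\x_0\|\ge \frac{\sqrt m}{\gamma_m+t}\|\z\|$ is homogeneous in $\vb$; just drop the normalization, as the paper does.
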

\begin{proof} Let $\x^*=\arg\min f(\x)$. Then, choose $\z=\A(\x^*-\x_0)$; which yields $\y=\A\x_0+\z=\A\x^*$. By construction, $\z\sim\Nn(0,\frac{\|\x^*-\x_0\|^2}{m}\Iden_m)$, hence with probability $1-\exp(-\frac{t^2}{2})$, $\|\z\|\leq (\gamma_m+t)\frac{\|\x^*-\x_0\|}{\sqrt{m}}$. Since $f(\x^*)\leq f(\x_0)$ and $\A\x^*-\y=0$, $\x^*$ is a (feasible) minimizer of \eqref{LASSO} and $\|\x^*-\x_0\|\geq \frac{\sqrt{m}}{\gamma_m+t}\|\z\|$.
\end{proof}
Proposition \ref{propo1} suggests that we can make error as big as the noise term $\|\z\|$. This contrasts with Theorem \ref{thm1} where the error is approximately $\frac{\Delxf}{\sqrt{m}}\|\z\|$ for sufficiently large $m$. The adversarial noise scenario can again be connected to least-squares in Section \ref{least squares}. In \eqref{inversion}, if the noise $\z$ already lies on $\text{Range}(\A)$, we will not have the reduction of $\sqrt{\frac{n}{m}}$ in the error. Similarly, Proposition \ref{propo1} constructs a noise vector that that lies in $\text{Range}(\A)$ and originates from the tangent cone element $\x^*-\x_0$. Hence, the resulting error norm is amplified by approximately $\frac{\sqrt{m}}{\Delxf}$.

Our next result gives an upper bound on the worst case error, which is close to the lower bound when $\Delxf\ll \gamma_m$. This uses a very similar argument to Corollary 3.3 of \cite{Cha}.
\begin{propo} Assume $\A\in\R^{m\times n}$ has independent $\Nn(0,\frac{1}{m})$ entries and assume $t<\gamma_m-\Delxf$. Then, with probability $1-\exp(-\frac{t^2}{2})$, the following bound uniformly hold for all noise vectors $\z\in\R^m$.
\beq
\max\{\|\x^*_L(\z)-\x_0\|,\|\x^*_S(\z)-\x_0\|\}\leq \frac{2\sqrt{m}\|\z\|}{\gamma_m-\Delxf-t}\nn
\eeq
\end{propo}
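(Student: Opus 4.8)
The plan is to prove a uniform (over all noise vectors $\z$) error bound by a deterministic reduction to a single high-probability event on the matrix $\A$, mirroring the argument behind \eqref{eq:ven}. Let $\x^*$ denote either $\x^*_L(\z)$ or $\x^*_S(\z)$, and write $\wh=\x^*-\x_0$. The first step is to argue that in both cases $\wh$ lies in the tangent cone $T_f(\x_0)$: for the Lasso this is because $f(\x^*)\le f(\x_0)$ by feasibility, and for the SOCP it is because $\x_0$ is feasible for the SOCP (since $\|\y-\A\x_0\|=\|\z\|$) so $f(\x^*)\le f(\x_0)$ as well. Hence $\wh/\|\wh\|\in\hat T_f(\x_0)$ whenever $\wh\neq 0$.

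Next I would extract the key norm inequality from optimality in each program. For the SOCP, feasibility of $\x^*$ gives $\|\y-\A\x^*\|\le\|\z\|$, so by the triangle inequality $\|\A\wh\|=\|\A\x^*-\A\x_0\|=\|(\y-\A\x_0)-(\y-\A\x^*)\|\le\|\z\|+\|\z\|=2\|\z\|$. For the Lasso, $\x^*$ minimizes $\|\y-\A\x\|$ over the feasible set, and $\x_0$ is feasible, so $\|\y-\A\x^*\|\le\|\y-\A\x_0\|=\|\z\|$, which likewise yields $\|\A\wh\|\le 2\|\z\|$. So in both cases we have the same bound $\|\A\wh\|\le 2\|\z\|$, uniformly in $\z$.

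The third and decisive step is to lower bound $\|\A\wh\|$ by $\|\wh\|$ times a restricted minimum singular value: define $\sigma:=\inf_{\vb\in\hat T_f(\x_0)}\|\A\vb\|=\siglong$. Then $\|\A\wh\|\ge\sigma\|\wh\|$, and combining with the previous step gives $\|\wh\|\le 2\|\z\|/\sigma$ — deterministically, once $\sigma>0$, and simultaneously for every $\z$. It remains to control $\sigma$ with high probability. This is exactly the content of (a Gaussian-process / Gordon-type) escape-through-the-mesh estimate, which appears implicitly in the proof of \eqref{eq:ven} in \cite{Cha}: with $\A$ having $\Nn(0,\tfrac1m)$ entries, one has $\sqrt{m}\,\sigma\ge\gamma_m-\DC-t$ with probability at least $1-\exp(-t^2/2)$. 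Plugging this in produces $\|\wh\|\le \dfrac{2\sqrt m\|\z\|}{\gamma_m-\DC-t}$ on that event, which holds for all $\z$ at once since the event depends only on $\A$.

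The main obstacle is the concentration bound for the restricted singular value $\sigma$, i.e. showing $\sqrt{m}\,\sigma\ge\gamma_m-\DC-t$ with the stated probability; everything else is an elementary deterministic chain of triangle inequalities valid for arbitrary $\z$. Since the excerpt explicitly states this proposition "uses a very similar argument to Corollary 3.3 of \cite{Cha}," I would simply invoke Gordon's comparison inequality (lower bound on $\min_{\vb\in\hat T_f(\x_0)}\|\A\vb\|$ in terms of $\gamma_m-\boldsymbol{\omega}(\hat T_f(\x_0))$) together with Gaussian concentration of $\vb\mapsto\|\A\vb\|$ rather than reprove it, and note that the condition $t<\gamma_m-\DC$ is precisely what keeps the resulting lower bound on $\sigma$ strictly positive so the division is legitimate. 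One minor point to handle cleanly: if $\wh=0$ the bound is trivial, so we may assume $\wh\neq0$ and normalize.
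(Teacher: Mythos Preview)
Your proposal is correct and follows essentially the same route as the paper's proof: show $\wh\in T_f(\x_0)$, use feasibility/optimality to get $\|\y-\A\x^*\|\le\|\z\|$ and hence $\|\A\wh\|\le 2\|\z\|$, then lower bound $\|\A\wh\|$ via the restricted minimum singular value controlled by Gordon's lemma (stated in the paper as Lemma~\ref{lemma:sing}). One tiny notational slip: you write $\sigma=\inf_{\vb\in\hat T_f(\x_0)}\|\A\vb\|$, but $\hat T_f(\x_0)=T_f(\x_0)\cap\Bc^{n-1}$ contains $0$, so the infimum should be over $T_f(\x_0)\cap\Sc^{n-1}$ as in your second expression $\siglong$.
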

\begin{proof} From Lemma \ref{lemma:sing}, with probability $1-\exp(-\frac{t^2}{2})$, we have,
\beq
\min_{\vb\in T_f(\x_0)\cap\Sc^{n-1}}\|\sqrt{m}\A\vb\|\geq \gamma_m-\Delxf-t\nn
\eeq
Assuming this happens, we will show the result.

\noindent{{Proof for Lasso:}} $\x_0$ is feasible for \eqref{LASSO} hence $\|\y-\A\x^*_L\|\leq \|\z\|$. Also $\x^*_L-\x_0\in T_f(\x_0)$. Consequently,
\beq
\|\x^*_L-\x_0\|\frac{\gamma_m-\Delxf-t}{\sqrt{m}}-\|\z\|\leq \|\A(\x^*_L-\x_0)\|-\|\z\|\leq \|\A(\x^*_L-\x_0)-\z\|=\|\A\x^*_L-\y\|\leq \|\A\x_0-\y\|=\|\z\|.\label{lasso applies}
\eeq
{{Proof for SOCP:}} $\x_0$ is feasible for \eqref{SOCP} hence $f(\x^*_S)\leq f(\x_0)$ and $\|\y-\A\x^*_S\|\leq\|\z\|$ holds. Hence \eqref{lasso applies} will apply for $\x^*_S$ as well.

\end{proof}

\section{Proof of the Main Result}
We begin with introducing some necessary notation in Section \ref{sec:not}. In Section \ref{sec:pre}, we enlist two critical results for our analysis. Finally, Section \ref{main proof} provides the proof.

\subsection{Notation}\label{sec:not}
 Throughout the proofs, $\Tco$ will denote the cone obtained by multiplying elements of $T_f(\x_0)$ by $\A$., i.e.,
 $$
 \Tco = \{\A\vb\in\mathbb{R}^m ~|~ \vb\in T_f(\x_0)\}.
 $$
Let $\Cc\in\R^n$ be a convex subset of the unit $\ell_2$-sphere $\Sc^{n-1}$. Then, the minimum singular value of $\A\in\R^{m\times n}$ restricted to $\Cc$ is defined as,
\beq\nn
\sigma_{min}(\A,\Cc)=\min_{\vb\in\Cc}\|\A\vb\|.
\eeq
Observe that, $\sigma_{min}(\A,\Sc^{n-1})$  reduces to the standard definition of the minimum singular value of the matrix $\A$.
The projection of a vector $\vb\in\mathbb{R}^{n}$ onto a closed and convex set $\Cc$ is the unique vector $\bu(\vb,\Cc) = \arg\min_{\s\in\Cc}{\|\vb-\s\|}$. 

When $\Cc$ is a closed and convex cone, its polar is defined as $\Cc\pol=\{\ub\big|\ub^T\vb\leq 0,~\text{for all}~\vb\in \Cc\}$. Moreau's Decomposition Theorem \cite{More}, says that, any vector $\vb$ can be decomposed as,
\beq
\vb=\bu(\vb,\Cc)+\bu(\vb,\Cc\pol),~\text{where}~\li\bu(\vb,\Cc),\bu(\vb,\Cc\pol)\ri=0.\label{Moreaus}
\eeq

\subsection{Preliminary Results}\label{sec:pre}
The next lemma is due to Gordon \cite{Gor} and relates the Gaussian width to the restricted eigenvalue. This concept is similar to restricted isometry property and has been topic of several related papers, \cite{Cha,Negah,BicRit,Raskutti}.

\begin{lem}[Restricted eigenvalue] \label{lemma:sing} Let $\Gb\in\R^{m\times n}$ have independent standard normal entries and $\Cc\in\Sc^{n-1}$. Assume $0\leq t\leq\gamma_m-\boldsymbol{\omega}(\Cc)$. Then,
\beq\nn
\Pro\left(\min_{\vb\in\Cc}\|\Gb\vb\|\geq \gamma_m-\boldsymbol{\omega}(\Cc)-t\right)\geq 1-\exp(-\frac{t^2}{2}).
\eeq
\end{lem}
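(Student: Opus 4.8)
The plan is to derive the statement from Gordon's Gaussian comparison inequality combined with Gaussian concentration of measure. First I would recast the quantity of interest as a Gaussian min-max process: since $\|\Gb\vb\|=\max_{\ub\in\Sc^{m-1}}\ub^T\Gb\vb$, we have $\min_{\vb\in\Cc}\|\Gb\vb\|=\min_{\vb\in\Cc}\max_{\ub\in\Sc^{m-1}}\ub^T\Gb\vb$, where both index sets are compact (replacing $\Cc$ by its closure if needed, and using $\Cc\subseteq\Sc^{n-1}$ so that $\|\vb\|=1$). I would then set $X_{\vb,\ub}=\ub^T\Gb\vb$ and introduce the decoupled comparison process $Y_{\vb,\ub}=\g^T\ub+\h^T\vb$, with $\g\in\R^m$ and $\h\in\R^n$ independent standard Gaussian vectors.

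The core of the argument is to verify the hypotheses of Gordon's comparison theorem in its increment form, which yields $\E[\min_{\vb}\max_{\ub}X_{\vb,\ub}]\geq \E[\min_{\vb}\max_{\ub}Y_{\vb,\ub}]$. Two conditions must be checked. For a fixed $\vb$ and any $\ub,\ub'$ the increments match exactly: $\E[(X_{\vb,\ub}-X_{\vb,\ub'})^2]=\|\ub-\ub'\|^2=\E[(Y_{\vb,\ub}-Y_{\vb,\ub'})^2]$, since the $\h^T\vb$ terms cancel. For $\vb\neq\vb'$, writing $a=\ub^T\ub'$ and $b=\vb^T\vb'$, a direct computation gives $\E[(X_{\vb,\ub}-X_{\vb',\ub'})^2]=2-2ab$ and $\E[(Y_{\vb,\ub}-Y_{\vb',\ub'})^2]=4-2a-2b$, so the required inequality $\E[(X_{\vb,\ub}-X_{\vb',\ub'})^2]\leq\E[(Y_{\vb,\ub}-Y_{\vb',\ub'})^2]$ reduces to $(1-a)(1-b)\geq 0$, which holds automatically since $a,b\leq 1$ on the unit spheres. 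This is the delicate step and the reason the min-max structure is essential: a plain Slepian comparison of the maximum alone would not suffice, and the variances $\E[X_{\vb,\ub}^2]=1$ and $\E[Y_{\vb,\ub}^2]=2$ do not agree, so one must invoke the increment formulation of Gordon's inequality rather than a variance-matched version.

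Granting the comparison, the right-hand process decouples because $\g^T\ub$ depends only on $\ub$ and $\h^T\vb$ only on $\vb$, giving $\min_{\vb\in\Cc}\max_{\ub\in\Sc^{m-1}}(\g^T\ub+\h^T\vb)=\max_{\ub\in\Sc^{m-1}}\g^T\ub+\min_{\vb\in\Cc}\h^T\vb=\|\g\|+\min_{\vb\in\Cc}\h^T\vb$. Taking expectations and using $\E\|\g\|=\gamma_m$ together with $\E[\min_{\vb\in\Cc}\h^T\vb]=-\E[\max_{\vb\in\Cc}(-\h)^T\vb]=-\boldsymbol{\omega}(\Cc)$ produces the expectation bound $\E[\min_{\vb\in\Cc}\|\Gb\vb\|]\geq\gamma_m-\boldsymbol{\omega}(\Cc)$.

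Finally I would upgrade this to the stated high-probability bound via Gaussian concentration. The map $F(\Gb)=\min_{\vb\in\Cc}\|\Gb\vb\|$ is $1$-Lipschitz in the Frobenius norm: for any $\Gb,\Gb'$ and unit $\vb$, $\|\Gb\vb\|\leq\|\Gb'\vb\|+\|(\Gb-\Gb')\vb\|\leq\|\Gb'\vb\|+\|\Gb-\Gb'\|_F$, and minimizing over $\vb$ gives $|F(\Gb)-F(\Gb')|\leq\|\Gb-\Gb'\|_F$. The Gaussian Lipschitz concentration inequality then yields $\Pro(F(\Gb)\leq\E[F(\Gb)]-t)\leq\exp(-t^2/2)$, and combining this with the expectation lower bound gives $\Pro(\min_{\vb\in\Cc}\|\Gb\vb\|\leq\gamma_m-\boldsymbol{\omega}(\Cc)-t)\leq\exp(-t^2/2)$, as claimed; the hypothesis $0\leq t\leq\gamma_m-\boldsymbol{\omega}(\Cc)$ simply keeps the threshold nonnegative. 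I expect the main obstacle to be setting up Gordon's comparison correctly — pinning down the direction of the two increment inequalities and handling the variance mismatch — after which the decoupling and concentration steps are routine.
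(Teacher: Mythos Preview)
The paper does not actually prove this lemma; it is stated as a known result due to Gordon \cite{Gor} and left without proof. Your proposal is a correct reconstruction of the standard argument, which is essentially Gordon's original proof: apply the Gaussian min--max comparison to bound $\E[\min_{\vb\in\Cc}\|\Gb\vb\|]$ from below by $\gamma_m-\boldsymbol{\omega}(\Cc)$, and then upgrade to a tail bound via Lipschitz concentration of $F(\Gb)=\min_{\vb\in\Cc}\|\Gb\vb\|$. Your increment check is right, and your remark that the expectation form of Gordon's inequality only needs the increment hypotheses (not matched variances) correctly addresses the variance mismatch between the bilinear and decoupled processes.

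One small comment: an alternative route, closer in spirit to how the paper uses Gordon elsewhere, is to invoke the probabilistic comparison (Lemma~\ref{OTH}) directly with $\Phi_1=\Cc$ and $\Phi_2=\Sc^{m-1}$; the lower process then becomes $\|\h\|-\sup_{\vb\in\Cc}\g^T\vb$, and concentration of each independent piece yields the tail. Your path through concentration of $F(\Gb)$ itself is arguably cleaner, since $F$ is $1$-Lipschitz in the Frobenius norm and delivers the $\exp(-t^2/2)$ bound in a single step.
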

The next theorem is the main technical contribution of this work. It provides an upper bound on the correlation between a vector and elements of a cone multiplied by a Gaussian matrix.
\begin{thm} [Restricted correlation]\label{thm:main}
Let $\Cc\in\R^n$ be a convex and closed cone, $m\geq 2$ and $\z\in\R^m$ be arbitrary. Let $\Gb\in\R^{m\times n}$ have independent standard normal entries. For any $t\geq0$, pick $\alpha\geq \frac{\Delxf+t}{\gamma_{m-1}}\|\z\|$. Then,
\beq
\sup_{\vb\in\Cc\cap \Sc^{n-1}}\{\z^T\Gb\vb-\alpha\|\Gb\vb\|\}\leq 0,\label{max over v}
\eeq
with probability $1-5\exp(-\frac{t^2}{26})$.
\end{thm}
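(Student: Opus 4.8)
The plan is to bound the supremum in \eqref{max over v} by decomposing the Gaussian matrix action along the direction of $\z$ and its orthogonal complement, in the spirit of the Gaussian comparison / escape-through-the-mesh arguments (Gordon's Lemma, Lemma \ref{lemma:sing}). Concretely, since both $\vb\mapsto\z^T\Gb\vb$ and $\vb\mapsto\|\Gb\vb\|$ are positively homogeneous, it suffices to control $\sup_{\vb\in\Cc\cap\Sc^{n-1}}\{\z^T\Gb\vb-\alpha\|\Gb\vb\|\}$. Writing $\bar\z=\z/\|\z\|$ and letting $\Pro_{\bar\z}$ be the orthogonal projection onto $\mathrm{span}(\bar\z)$ inside $\R^m$, decompose $\Gb\vb=(\bar\z^T\Gb\vb)\,\bar\z + \Pro_{\bar\z}^\perp\Gb\vb$, so that $\|\Gb\vb\|^2=(\bar\z^T\Gb\vb)^2+\|\Pro_{\bar\z}^\perp\Gb\vb\|^2$. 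The key point is that $\bar\z^T\Gb$ is a standard Gaussian row $\g^T\in\R^n$ independent of the $(m-1)$-dimensional Gaussian map $\Gb':=\Pro_{\bar\z}^\perp\Gb$ acting on $\R^n$. Thus $\z^T\Gb\vb=\|\z\|\,\g^T\vb$ and, using $\|\Gb\vb\|\geq\|\Gb'\vb\|$, it is enough to show
\beq
\sup_{\vb\in\Cc\cap\Sc^{n-1}}\Big\{\|\z\|\,\g^T\vb-\alpha\|\Gb'\vb\|\Big\}\leq 0\nn
\eeq
with the stated probability.

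Next I would handle the two random quantities separately and combine via a union bound. For the numerator: $\sup_{\vb\in\Cc\cap\Sc^{n-1}}\g^T\vb = \sup_{\vb\in\hat T_f(\x_0)}\g^T\vb$ has expectation exactly $\boldsymbol{\omega}(\hat T_f(\x_0))=\DC$ (using $\Cc=T_f(\x_0)$ and the intersection with the sphere versus the ball gives the same value since the cone contains $\mathbf 0$), and by Gaussian concentration of Lipschitz functions it exceeds $\DC+t_1$ with probability at most $\exp(-t_1^2/2)$. For the denominator: by Gordon's Lemma applied to the $(m-1)$-dimensional Gaussian operator $\Gb'$, namely Lemma \ref{lemma:sing} with $m$ replaced by $m-1$, we have $\inf_{\vb\in\Cc\cap\Sc^{n-1}}\|\Gb'\vb\|\geq\gamma_{m-1}-\DC-t_2$ with probability at least $1-\exp(-t_2^2/2)$ — here one needs $\gamma_{m-1}\geq\DC$, but the hypothesis only gives $\gamma_m\geq\DC$; the fix is that $\alpha$ is chosen large enough ($\alpha\geq\frac{\DC+t}{\gamma_{m-1}}\|\z\|$) that even a crude lower bound on $\|\Gb'\vb\|$, or the degenerate case, can be absorbed. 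On the good event,
\beq
\|\z\|\g^T\vb-\alpha\|\Gb'\vb\|\;\leq\;\|\z\|(\DC+t_1)-\alpha(\gamma_{m-1}-\DC-t_2),\nn
\eeq
and choosing $t_1,t_2$ each a constant multiple of $t$ so that $t_1+\tfrac{\alpha\gamma_{m-1}}{\|\z\|}\cdot\frac{t_2}{\cdots}\le\ldots$ — i.e. calibrating the split of $t$ — makes the right side $\le 0$ precisely when $\alpha\geq\frac{\DC+t}{\gamma_{m-1}}\|\z\|$. The factor $5$ and the constant $26$ in the exponent come from this calibration together with an extra concentration step below.

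The subtlety I have glossed over, and which I expect to be the main obstacle, is that the bound must be \emph{uniform} over $\vb$ with the \emph{same} $t$ controlling both the numerator and the correlated denominator; a naive union bound couples the two events but the argument also needs $\|\Gb'\vb\|$ to be \emph{lower}-bounded uniformly while the cross term $\bar\z^T\Gb\vb$ could be large and positive for the maximizing $\vb$ — dropping it via $\|\Gb\vb\|\geq\|\Gb'\vb\|$ is safe but potentially lossy, and getting the sharp constant $\frac{1}{\gamma_{m-1}}$ rather than $\frac{1}{\gamma_m}$ is exactly what forces the more careful $(m-1)$-dimensional accounting. I would therefore also need a concentration bound for $\|\Gb'\vb\|$ as a function of $\g$ and $\Gb'$ jointly (it is Lipschitz in the pair), plus possibly a separate estimate controlling $\sup_\vb|\bar\z^T\Gb\vb|$ or $\|\Gb\|$ on the relevant cone to rule out pathological $\vb$; each such step contributes a term to the "$5$" and a loss in the exponent constant. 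Once the deterministic inequality $\|\z\|(\DC+t_1)\le\alpha(\gamma_{m-1}-\DC-t_2)$ is arranged, \eqref{max over v} follows by taking suprema. The routine parts are the Gaussian concentration estimates and the bookkeeping of how $t$ is apportioned among the three or four events; the conceptual heart is the reduction to the independent pair $(\g,\Gb')$ and the correct use of Gordon's Lemma in dimension $m-1$.
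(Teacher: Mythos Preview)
Your decomposition into $\g=\bar\z^T\Gb$ and $\Gb'=\Pro_{\bar\z}^\perp\Gb$ is natural, but the deterministic step where you ``calibrate the split of $t$'' cannot work as stated. After your uniform bounds $\sup_\vb\g^T\vb\le\DC+t_1$ and $\inf_\vb\|\Gb'\vb\|\ge\gamma_{m-1}-\DC-t_2$, the inequality
\[
\|\z\|(\DC+t_1)-\alpha(\gamma_{m-1}-\DC-t_2)\;\le\;0
\]
is equivalent to $\alpha\ge\dfrac{(\DC+t_1)\,\|\z\|}{\gamma_{m-1}-\DC-t_2}$, and no apportionment of $t_1,t_2$ as multiples of $t$ removes the $-\DC$ from the denominator. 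Retaining the cross term in $\|\Gb\vb\|^2=(\g^T\vb)^2+\|\Gb'\vb\|^2$ does not save this either: the condition becomes $\sup_\vb \g^T\vb/\|\Gb'\vb\|\le\halp/\sqrt{1-\halp^2}$ with $\halp=\alpha/\|\z\|$, and plugging in the same uniform bounds still forces a denominator of order $\gamma_{m-1}-\DC$. Since the whole point of the theorem is the sharp threshold $\alpha\ge(\DC+t)\|\z\|/\gamma_{m-1}$ (this is what ultimately yields the factor $\DC/\sqrt{m}$ in Theorem~\ref{thm1} rather than $1/(\sqrt{m}-\DC)$), your route proves only a strictly weaker statement.

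The paper does \emph{not} decompose $\Gb$ along $\z$. Instead it rewrites $\alpha\|\Gb\vb\|-\z^T\Gb\vb=\max_{\ub\in\alpha\Sc^{m-1}-\z}\ub^T\Gb\vb$ and applies Gordon's comparison (Lemma~\ref{OTH}) to this bilinear min--max. After the comparison, the minimization over $\vb$ collapses to the scalar $\|\bu(\g,\Cc)\|$, and one is left with a maximization over $\ab\in\Sc^{m-1}$ of $\h^T(\alpha\ab-\z)-\|\alpha\ab-\z\|\,\|\bu(\g,\Cc)\|$. The crucial step is a \emph{specific} choice $\ab=\sqrt{1-\beta^2}\,\h_1/\|\h_1\|+\beta\hat\z$ with $\beta=\|\bu(\g,\Cc)\|/\|\h_1\|$, where $\h_1$ is the component of $\h$ orthogonal to $\z$. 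This choice balances the two competing terms and, after a short calculus lemma on the resulting one-variable function of $\halp=\alpha/\|\z\|$, delivers the denominator $\gamma_{m-1}$ exactly. The three concentration events (on $\|\h_1\|$, $\|\bu(\g,\Cc)\|$, and $\hat\z^T\h$) with $\tau=t/3.6$ produce the constants $5$ and $26$. The idea you are missing is this dual reformulation and the tuned choice of $\ab$; uniform lower bounds on $\|\Gb'\vb\|$ over the cone are too crude to reach the stated threshold.
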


\subsection{Proof of Theorem \ref{thm1}}\label{main proof}
We will start by providing deterministic bounds on the estimation error. Then, with the help of Lemma \ref{lemma:sing} and Theorem \ref{thm:main}, we will finalize the proof.

\begin{lem}[Deterministic error bounds] \label{determine}Consider the problems \eqref{LASSO} and \eqref{SOCP}. We have,
\begin{align}
\max\{\|\x^*_L-\x_0\|,\frac{1}{2}\|\x^*_S-\x_0\|\}\leq \frac{\|\bu(\z,\Tco)\|}{\siglong}.\nn
\end{align}
\end{lem}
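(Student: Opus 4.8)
The plan is to establish the two bounds separately but by essentially the same geometric argument, treating the Lasso first and then observing that the SOCP case follows with an extra factor of $2$. For the Lasso, set $\h=\x^*_L-\x_0$; feasibility of $\x^*_L$ in \eqref{LASSO} gives $f(\x_0+\h)\le f(\x_0)$, so $\h\in D_f(\x_0)\subseteq T_f(\x_0)$, hence $\A\h\in\Tco$. The optimality of $\x^*_L$ (it minimizes $\tfrac12\|\y-\A\x\|^2$ over the feasible set, and $\x_0$ is feasible) gives $\|\y-\A\x^*_L\|\le\|\y-\A\x_0\|=\|\z\|$, i.e. $\|\A\h-\z\|\le\|\z\|$. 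Squaring yields $\|\A\h\|^2\le 2\z^T\A\h\le 2\|\bu(\z,\Tco)\|\,\|\A\h\|$, where the last inequality uses Cauchy--Schwarz together with the fact that $\A\h$ lies in the cone $\Tco$, so its inner product with $\z$ is at most its inner product with the projection $\bu(\z,\Tco)$ (this is the standard ``correlation with a cone element is governed by the projection onto the cone'' fact, provable via Moreau's decomposition \eqref{Moreaus}: write $\z=\bu(\z,\Tco)+\bu(\z,\Tco^\circ)$ and note $\langle\bu(\z,\Tco^\circ),\A\h\rangle\le 0$). Dividing by $\|\A\h\|$ gives $\|\A\h\|\le 2\|\bu(\z,\Tco)\|$.

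Next I would lower-bound $\|\A\h\|$ in terms of $\|\h\|$. Since $\h\in T_f(\x_0)$, the unit vector $\h/\|\h\|$ lies in $T_f(\x_0)\cap\Sc^{n-1}$, so $\|\A\h\|\ge \siglong\cdot\|\h\|$ by the definition of the restricted minimum singular value. Combining, $\|\h\|\le \|\A\h\|/\siglong \le 2\|\bu(\z,\Tco)\|/\siglong$. Wait --- this gives a factor $2$ for the Lasso, whereas the claimed bound has no factor $2$ for Lasso and a factor $2$ for SOCP. So the squaring step must be done more carefully: from $\|\A\h-\z\|\le\|\z\|$ one gets $\|\A\h\|^2-2\z^T\A\h\le 0$, hence $\|\A\h\|\le 2\,\frac{\z^T\A\h}{\|\A\h\|}$, but in fact the sharper route is to use the projection directly: $\|\A\h-\z\|\ge \dt(\A\h,\,\z+\Tco^\circ)$... rather, the clean argument is $\|\A\h-\z\|^2 = \|\A\h-\bu(\z,\Tco)\|^2 + \|\bu(\z,\Tco^\circ)\|^2 - 2\langle \A\h - \bu(\z,\Tco),\ \bu(\z,\Tco^\circ)\rangle$ and since $\A\h\in\Tco$, $\langle \A\h,\bu(\z,\Tco^\circ)\rangle\le 0$ while $\langle\bu(\z,\Tco),\bu(\z,\Tco^\circ)\rangle=0$, giving $\|\A\h-\z\|^2\ge \|\A\h-\bu(\z,\Tco)\|^2+\|\bu(\z,\Tco^\circ)\|^2$. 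From $\|\A\h-\z\|\le\|\z\|$ and $\|\z\|^2=\|\bu(\z,\Tco)\|^2+\|\bu(\z,\Tco^\circ)\|^2$ we conclude $\|\A\h-\bu(\z,\Tco)\|\le\|\bu(\z,\Tco)\|$, hence $\|\A\h\|\le 2\|\bu(\z,\Tco)\|$ by triangle inequality --- still a $2$. The resolution is that for the Lasso the correct projection is onto $\Tco$ but one then exploits that $\A\h - \z \perp$ the feasible directions at the optimum more tightly; I would instead follow the cleaner path: by optimality, $\A\h$ is the projection of $\z$ onto $\Tco$ in the degenerate sense that $\|\A\h-\z\|\le\|\s-\z\|$ for the relevant comparison, which forces $\|\A\h\|\le\|\bu(\z,\Tco)\|$ directly; this is exactly where the Lasso beats the SOCP by a factor $2$, and pinning down this inequality precisely is the delicate point.

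For the SOCP, set $\h=\x^*_S-\x_0$. Feasibility of $\x_0$ in \eqref{SOCP} ($\|\y-\A\x_0\|=\|\z\|$) plus optimality gives $f(\x^*_S)\le f(\x_0)$, so again $\h\in T_f(\x_0)$ and $\A\h\in\Tco$; and the constraint gives $\|\A\h-\z\|\le\|\z\|$. Now the same squaring/Moreau argument yields $\|\A\h\|\le 2\|\bu(\z,\Tco)\|$, and then $\|\h\|\le\|\A\h\|/\siglong\le 2\|\bu(\z,\Tco)\|/\siglong$, which is the claimed SOCP bound. So the asymmetry between the two cases comes entirely from the fact that the Lasso \emph{minimizes} $\|\y-\A\x\|$ (giving the sharper projection inequality without the $2$), whereas the SOCP only \emph{constrains} it (giving the weaker inequality with the $2$).

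\textbf{Main obstacle.} The crux is the Lasso inequality $\|\A\h\|\le\|\bu(\z,\Tco)\|$ (versus the easy $\le 2\|\bu(\z,\Tco)\|$). The right way to see it: $\x^*_L$ minimizes $\tfrac12\|\z-\A\vb\|^2$ over $\vb$ in the feasible set, and in particular over the ray $\{t\,\h : t\in[0,1]\}$ (all feasible by convexity of $\{f\le f(\x_0)\}$, since $\x_0$ and $\x^*_L$ both lie in it), so the derivative at $t=1$ is $\le 0$: $\langle \A\h-\z,\ \A\h\rangle\le 0$, i.e. $\|\A\h\|^2\le\z^T\A\h\le\langle\bu(\z,\Tco),\A\h\rangle\le\|\bu(\z,\Tco)\|\,\|\A\h\|$. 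That clean one-line stationarity argument is the key insight, and the remaining steps (restricted-singular-value lower bound on $\|\A\h\|$, and the SOCP analogue with its extra $2$) are routine. I would also double-check the edge case $\A\h=0$, where the bound holds trivially.
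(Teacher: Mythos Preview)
Your final argument is correct. For the SOCP case your Moreau-decomposition route is essentially identical to the paper's: decompose $\z=\z_1+\z_2$ with $\z_1=\bu(\z,\Tco)$, use $\z_2^T\A\h\le 0$ to get $\|\z-\A\h\|^2\ge\|\z_2\|^2+\|\z_1-\A\h\|^2$, and combine with feasibility $\|\z-\A\h\|\le\|\z\|$ to conclude $\|\A\h\|\le 2\|\z_1\|$.

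For the Lasso case you take a genuinely different path. The paper proves $\|\A\h\|\le\|\z_1\|$ by contradiction: assuming $\|\A\h\|>\|\z_1\|$, it rescales to $\w'=\frac{\|\z_1\|}{\|\A\h\|}\h$, notes $\w'$ is feasible by convexity of the sublevel set, and then verifies by a direct expansion that $\|\z-\A\w'\|<\|\z-\A\h\|$, violating optimality of $\h$. Your stationarity argument is more economical: optimality of $\h$ along the feasible segment $\{t\h:t\in[0,1]\}$ forces the one-sided derivative condition $\langle\A\h-\z,\A\h\rangle\le 0$, i.e.\ $\|\A\h\|^2\le\z^T\A\h$, and then Moreau plus Cauchy--Schwarz finishes. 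This makes the source of the saved factor~$2$ transparent (you obtain $\|\A\h\|^2\le\z^T\A\h$ directly rather than only $\|\A\h\|^2\le 2\z^T\A\h$ from squaring $\|\A\h-\z\|\le\|\z\|$). The paper's contradiction approach is slightly more hands-on but has the virtue of being purely geometric, never invoking a derivative. Both arguments ultimately rest on the same two ingredients: convexity of the feasible set (so the ray from $0$ to $\h$ is feasible) and the Moreau decomposition of $\z$ relative to the cone $\Tco$.
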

\begin{proof} Using \eqref{Moreaus}, let us write, $\z=\z_1+\z_2$ where $\z_1=\bu(\z,\Tco)$, $\z_2=\bu(\z,(\Tco)\pol)$, $\z_1^T\z_2=0$.

\noindent$\bullet$ \emph{Lasso:} Let $\w^*=\x^*_L-\x_0$. We will first show that $\|\A\w^*\|\leq\|\z_1\|$. Assume it is \emph{not} the case and let $\w'=\frac{\|\z_1\|}{\|\A\w^*\|}\w^*$. From convexity, $f(\x_0+\w')\leq f(\x_0)$, hence $\w'$ is feasible. We will show that $\|\z-\A\w'\|<\|\z-\A\w^*\|$, which will contradict with the optimality of $\w^*$.
\beq
\|\z-\A\w^*\|^2\geq \|\z-\A\w'+\A(\w'-\w^*)\|^2= \|\z-\A\w'\|^2+2\li\z-\A\w',\A(\w'-\w^*)\ri+\|\A(\w'-\w^*)\|^2\nn
\eeq
Now, observe that,
\begin{align}
\li\z-\A\w',\A(\w'-\w^*)\ri&\geq -\|\z_1\|\|\A(\w'-\w^*)\|-\li\A\w',\A(\w'-\w^*)\ri\nn\\
&\geq  -\|\z_1\|\|\A(\w'-\w^*)\|+\|\A\w'\|\|\A(\w'-\w^*)\|\nn\\
&\geq  (\|\A\w'\|-\|\z_1\|)\|\A(\w'-\w)\|>0\nn
\end{align}
Hence $\|\A\w^*\|\leq\|\z_1\|$. To conclude, we use the fact that $\|\w^*\|\leq \frac{\|\A\w^*\|}{\siglong}$.

\noindent$\bullet$ \emph{SOCP:} Let $\w^*=\x^*_S-\x_0$. Then, the problem becomes,
\beq
\w^*=\arg\min_\w f(\x_0+\w)~~~\text{subject to}~~~\|\z-\A\w\|\leq\delta\nn
\eeq
First observe that $0$ is feasible, hence $\w^*\in T_f(\x_0)$. Then, for any $\w\in T_f(\x_0)$,
\beq
\|\z\|^2=\|\z-\A\w\|^2=\|\z_2+\z_1-\A\w\|^2=\|\z_2\|^2+\|\z_1-\A\w\|^2+2\li\z_2,\z_1-\A\w\ri\geq \|\z_2\|^2+\|\z_1-\A\w\|^2,\nn
\eeq
where we used the fact that $\z_2^T\A\w\leq 0$ as $\A\w\in \Tco$. Now, using $\w^*\in T_f(\x_0)$, we find,
\beq
(\|\A\w^*\|-\|\z_1\|)^2\leq \|\z_1\|^2\implies \|\A\w^*\|\leq 2\|\z_1\|\implies \|\w^*\|\leq \frac{2\|\z_1\|}{\siglong}\nn
\eeq
\end{proof}

We are now ready to prove Theorem \ref{thm1}
\begin{proof}[Proof of Theorem \ref{thm1}]
Suppose $0\leq t<\gamma_m-\Delxf$. We will make use of the fact that, the following events hold with probability $1-\exp(-\frac{t^2}{2})-5\exp(-\frac{t^2}{26})$.
\begin{itemize}
\item Observe that $\boldsymbol{\omega}(T_f(\x_0)\cap\mathcal{S}^{n-1})\leq \Delxf$. Hence, applying Lemma \ref{lemma:sing} with $\Gb=\sqrt{m}\A$ and $\Cc=T_f(\x_0)\cap \Sc^{n-1}$, with probability $1-\exp(-\frac{t^2}{2})$, we have,
\beq
\siglong\geq \frac{\gamma_m-\Delxf-t}{\sqrt{m}}.\label{bound11}
\eeq
\item Applying Theorem \ref{thm:main} with $\A=\frac{\Gb}{\sqrt{m}}$ and $\Cc=T_f(\x_0)$, with probability $1-5\exp(-\frac{t^2}{26})$,
\beq
\|\bu(\z,\Tco)\| \leq \frac{\Delxf+t}{\gamma_{m-1}}\|\z\|.\label{bound12}
\eeq
To see this, pick $\vb$ in \eqref{max over v} such that $\A\vb=\frac{\bu(\z,\Tco)}{\|\bu(\z,\Tco)\|}$, which gives $\z^T\A\vb=\|\bu(\z,\Tco)\|$.
\end{itemize}
Now, the bounds in \eqref{lasso1} and \eqref{socp1} follow when we substitute \eqref{bound11} and \eqref{bound12} in Lemma \ref{determine}.
\end{proof}

\section{Proof of Theorem \ref{thm:main}}

\subsection{Auxiliary results}
There are a few ingredients of the proof. First, we require a result, which allows us to compare two Gaussian processes. This result is again due to Gordon (see Lemma 3.1 in \cite{Gor}). We make use of a slightly modified version of the original lemma, which can be found in \cite{Oym} (cf. Lemma 5.1).
\begin{lem} [Comparison Lemma, \cite{Gor}]\label{OTH} Let ${\bf{G}}\in\R^{m\times n}$ have independent standard normal entries. Let $\h\sim\Nn(0,\Iden_m)$ and $\g\sim\Nn(0,\Iden_n)$. Let $\Phi_1\subset\R^n$ be an arbitrary set and let $\Phi_2\subset\R^m$ be a compact set. Then,
\beq\nn
\Pro\left(\min_{\x\in\Phi_1}\max_{\ab\in\Phi_2} ~\x^T{\bf{G}}\ab~\geq c\right)\geq 2\Pro\left(\min_{\x\in\Phi_1}\max_{\ab\in\Phi_2}~\|\x\|\h^T\ab-\|\ab\|\g^T\x~\geq c\right)-1.
\eeq
\end{lem}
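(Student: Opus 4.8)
The plan is to realize the two sides as min--max functionals of two Gaussian processes and invoke the classical Gordon comparison inequality, with an auxiliary-variable device to repair a second-moment mismatch. Index a process by $(\x,\ab)\in\Phi_1\times\Phi_2$ and set $X_{\x,\ab}=\x^T\Gb\ab$ and $Y_{\x,\ab}=\|\x\|\h^T\ab-\|\ab\|\g^T\x$; since $\g$ is symmetric the sign of the second term is immaterial to the distribution. Using independence of the entries of $\Gb$ (resp.\ of $\h$ and $\g$), a direct computation gives $\E[X_{\x,\ab}X_{\x',\ab'}]=\li\x,\x'\ri\li\ab,\ab'\ri$ and $\E[Y_{\x,\ab}Y_{\x',\ab'}]=\|\x\|\|\x'\|\li\ab,\ab'\ri+\|\ab\|\|\ab'\|\li\x,\x'\ri$. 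The variances do not match, since $\E[X_{\x,\ab}^2]=\|\x\|^2\|\ab\|^2$ while $\E[Y_{\x,\ab}^2]=2\|\x\|^2\|\ab\|^2$, so first I would augment $X$ with an independent scalar $g_0\sim\Nn(0,1)$, defining $\hat X_{\x,\ab}=\x^T\Gb\ab+g_0\|\x\|\|\ab\|$, which restores $\E[\hat X_{\x,\ab}^2]=2\|\x\|^2\|\ab\|^2=\E[Y_{\x,\ab}^2]$.

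Next I would verify Gordon's three hypotheses for the pair $(\hat X,Y)$, with the min over the first index $\x$ and the max over the second index $\ab$. The equal-variance condition holds by construction, and the same-$\x$ condition holds with equality, $\E[\hat X_{\x,\ab}\hat X_{\x,\ab'}]=\E[Y_{\x,\ab}Y_{\x,\ab'}]$. The only condition needing work is the one for distinct $\x\neq\x'$, where after cancellation the required inequality $\E[\hat X_{\x,\ab}\hat X_{\x',\ab'}]\geq\E[Y_{\x,\ab}Y_{\x',\ab'}]$ reduces to
\[
(\|\x\|\|\x'\|-\li\x,\x'\ri)(\|\ab\|\|\ab'\|-\li\ab,\ab'\ri)\geq 0,
\]
which is immediate from Cauchy--Schwarz applied separately in $\R^n$ and $\R^m$. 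The classical Gordon probabilistic min--max comparison then yields $\Pro(\min_\x\max_\ab\hat X_{\x,\ab}\geq c)\geq\Pro(\min_\x\max_\ab Y_{\x,\ab}\geq c)$.

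It remains to remove the auxiliary variable $g_0$, and this is exactly where the factor $2\Pro(\cdots)-1$ appears. Conditioning on $\Gb$, the map $\phi(g_0):=\min_\x\max_\ab(\x^T\Gb\ab+g_0\|\x\|\|\ab\|)$ is non-decreasing in $g_0$, because every summand increases with $g_0$; hence $\{\phi(g_0)\geq c\}=\{g_0\geq s(\Gb)\}$ for a threshold $s(\Gb)$, so $\Pro(\min\max\hat X\geq c\mid\Gb)=\Pro(g_0\geq s(\Gb))$ while $\Pro(\min\max X\geq c\mid\Gb)=\mathbf{1}\{s(\Gb)\leq 0\}$. Using the elementary bound $\Pro(g_0\geq s)\leq\tfrac12+\tfrac12\mathbf{1}\{s\leq 0\}$ (the normal tail is below $\tfrac12$ when $s>0$ and below $1$ always) and taking expectations over $\Gb$ gives $\Pro(\min\max Y\geq c)\leq\tfrac12+\tfrac12\Pro(\min\max X\geq c)$, which rearranges to the claimed inequality.

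I expect two points to demand the most care. The substantive one is the second-moment mismatch and the $g_0$-stripping step above: obtaining the sharp factor $2\Pro-1$, rather than the weaker $\Pro-\tfrac12$ that a crude union bound over $\{g_0\leq 0\}$ would give, relies on combining the monotonicity of $\phi$ with the tail estimate. The second, more routine, point is that Gordon's theorem is typically stated for finite index sets, so to accommodate an arbitrary $\Phi_1$ and a compact $\Phi_2$ I would first establish the inequality over finite subsets and then pass to the infimum and supremum by a net argument, using compactness of $\Phi_2$ and continuity of the processes to ensure the min--max is attained and the limits are well behaved.
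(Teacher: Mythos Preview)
Your proposal is correct and follows the standard route: augment $X$ with $g_0\|\x\|\|\ab\|$ to match second moments, verify Gordon's covariance hypotheses via the Cauchy--Schwarz factorization $(\|\x\|\|\x'\|-\li\x,\x'\ri)(\|\ab\|\|\ab'\|-\li\ab,\ab'\ri)\geq 0$, and then strip $g_0$ using monotonicity of $\phi$ together with the tail bound $\Pro(g_0\geq s)\leq\tfrac12+\tfrac12\mathbf{1}\{s\leq 0\}$ to obtain the sharp factor $2\Pro(\cdot)-1$. The paper does not supply its own proof of this lemma---it simply cites Gordon \cite{Gor} and the modified version in \cite{Oym} (Lemma~5.1)---and your argument is precisely the one carried out in those references, so there is nothing to contrast.
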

A function $f(\cdot):\R^n\rightarrow\R$ is called $L$-Lipschitz, if for all $\x,\y\in\R^n$,
\beq
|f(\x)-f(\y)|\leq L\|\x-\y\|\nn.
\eeq
The next lemma is a standard result on concentration properties of Lipschitz functions of Gaussian vectors, \cite{Tal}.
\begin{lem} \label{talagrand}Let $\g\sim\Nn(0,\Iden_n)$, $g\sim\Nn(0,1)$ and $f(\cdot):\R^n\rightarrow\R$ be an $L$-Lipschitz function.
Then, for $t\geq 0$,
\begin{align}
&\Pro(f(\g)-\E[f(\g)]\geq t)\leq \exp(-\frac{t^2}{2L^2})\nn,\\
&\Pro(f(\g)-\E[f(\g)]\leq -t)\leq \exp(-\frac{t^2}{2L^2})\nn.\\
&\Pro(g\geq t)\leq \frac{1}{2}\exp(-\frac{t^2}{2})\nn
\end{align}
\end{lem}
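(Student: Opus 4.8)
The three claims split naturally: the first two tails for $f(\g)$ are the substantive Gaussian concentration estimate, while the third is an elementary scalar bound. The plan is to prove the upper tail $\Pro(f(\g)-\E[f(\g)]\geq t)\leq\exp(-t^2/(2L^2))$, deduce the lower tail by applying it to $-f$, and dispatch the scalar estimate by hand. Throughout I use that an $L$-Lipschitz function has at most linear growth, so $\E[f(\g)]$ and the moment generating function $H(\lambda)=\E[\exp(\lambda f(\g))]$ are finite and $H$ is smooth in $\lambda$.

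The first step is to reduce to smooth $f$. Replacing $f$ by its Gaussian mollification $f_\eps(\x)=\E[f(\x+\eps\g')]$, with $\g'$ an independent standard Gaussian, preserves the $L$-Lipschitz property, makes $f_\eps$ differentiable with $\|\nabla f_\eps\|\leq L$ everywhere, and gives $f_\eps\to f$ locally uniformly as $\eps\downarrow0$; the tail bound for $f$ then follows from the bounds for the $f_\eps$ by a routine limiting argument. So assume $f$ is smooth with $\|\nabla f\|\leq L$. The heart of the matter is the moment generating function bound $H(\lambda)\leq\exp(\lambda\E[f(\g)]+\tfrac12\lambda^2L^2)$, which is where the \emph{sharp} exponent is won. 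The cleanest route is the Gaussian logarithmic Sobolev inequality combined with Herbst's argument: applying the log-Sobolev inequality to $u=\exp(\lambda f/2)$ and using $\|\nabla u\|^2=\tfrac{\lambda^2}{4}\|\nabla f\|^2\exp(\lambda f)\leq\tfrac{\lambda^2L^2}{4}\exp(\lambda f)$ turns the functional inequality into the differential inequality $\lambda H'(\lambda)-H(\lambda)\log H(\lambda)\leq\tfrac12\lambda^2L^2H(\lambda)$. Setting $K(\lambda)=\lambda^{-1}\log H(\lambda)$ this reads $K'(\lambda)\leq L^2/2$, and integrating from $0$, where $K(0^+)=\E[f(\g)]$, yields $\log H(\lambda)\leq\lambda\E[f(\g)]+\tfrac12\lambda^2L^2$, which is exactly the desired bound. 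A Chernoff step then gives $\Pro(f(\g)-\E[f(\g)]\geq t)\leq\exp(-\lambda t+\tfrac12\lambda^2L^2)$, and optimizing at $\lambda=t/L^2$ produces $\exp(-t^2/(2L^2))$. The lower tail is immediate upon applying the upper tail to $-f$, which is also $L$-Lipschitz with mean $-\E[f(\g)]$.

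The main obstacle is precisely the sharp constant in the exponent. A purely elementary interpolation argument (two independent copies $\g,\g'$ and the rotation path $\g_\theta=\g\sin\theta+\g'\cos\theta$, followed by Jensen's inequality and conditioning on the Gaussian $\langle\nabla f(\g),\g'\rangle$) does reach a bound of the same shape, but with the MGF exponent inflated by a factor $\pi^2/4$, giving only $\exp(-2t^2/(\pi^2L^2))$; recovering the optimal factor $1/2$ is what forces the use of the log-Sobolev inequality (or, equivalently, the Gaussian isoperimetric inequality of Borell and Sudakov--Tsirelson). I would therefore either invoke the Gaussian log-Sobolev inequality as a known ingredient or, for a self-contained account, establish it by tensorizing the one-dimensional case.

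Finally, the scalar estimate $\Pro(g\geq t)\leq\tfrac12\exp(-t^2/2)$ is proved directly. The function $h(t)=\tfrac12\exp(-t^2/2)-\Pro(g\geq t)$ satisfies $h(0)=0$ and $h(t)\to0$ as $t\to\infty$, and its derivative $h'(t)=\exp(-t^2/2)\bigl(\tfrac{1}{\sqrt{2\pi}}-\tfrac{t}{2}\bigr)$ changes sign exactly once on $[0,\infty)$, at $t=\sqrt{2/\pi}$; hence $h$ increases and then decreases, so it remains nonnegative throughout, which is the claim.
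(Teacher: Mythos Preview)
Your proof is correct. The paper itself does not prove this lemma at all: it is stated as a standard result and attributed to \cite{Tal} without argument. So there is no proof in the paper to compare against; you have supplied what the paper omits. Your route---Gaussian log-Sobolev plus the Herbst differential inequality for the Lipschitz tails, and the sign-change analysis of $h(t)=\tfrac12 e^{-t^2/2}-\Pro(g\geq t)$ for the scalar bound---is standard, complete, and delivers the sharp constants the paper uses downstream.
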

The following lemma provides a useful identity for the projection of a vector onto a cone.
\begin{lem} \label{super sym}Let $\Cc\subset\R^n$ be a closed and convex cone and $\vb\in\R^n$. Then,
\beq\nn
\max_{\ub\in\Cc\cap\Bc^{n-1}}\ub^T\vb=\|\bu(\vb,\Cc)\|.
\eeq
\end{lem}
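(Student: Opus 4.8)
The plan is to establish the identity $\max_{\ub\in\Cc\cap\Bc^{n-1}}\ub^T\vb=\|\bu(\vb,\Cc)\|$ by splitting into the two easy inequalities. First I would use Moreau's decomposition \eqref{Moreaus}: write $\vb=\vb_1+\vb_2$ with $\vb_1=\bu(\vb,\Cc)$, $\vb_2=\bu(\vb,\Cc\pol)$, and $\li\vb_1,\vb_2\ri=0$. For the lower bound ``$\geq$'', I would exhibit a concrete feasible point. If $\vb_1=\mathbf{0}$ the claim is trivial (the left side is nonnegative since $\mathbf{0}\in\Cc\cap\Bc^{n-1}$, and one checks it is also $\leq 0$ because $\vb=\vb_2\in\Cc\pol$ means $\ub^T\vb\leq0$ for all $\ub\in\Cc$; hence both sides are $0$). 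Otherwise take $\ub^\star=\vb_1/\|\vb_1\|\in\Cc\cap\Bc^{n-1}$ (using that $\Cc$ is a cone, so $\vb_1\in\Cc$ implies $\ub^\star\in\Cc$), and compute $(\ub^\star)^T\vb=(\ub^\star)^T(\vb_1+\vb_2)=\|\vb_1\|+\frac{1}{\|\vb_1\|}\li\vb_1,\vb_2\ri=\|\vb_1\|=\|\bu(\vb,\Cc)\|$.

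For the upper bound ``$\leq$'', take any $\ub\in\Cc\cap\Bc^{n-1}$. Then $\ub^T\vb=\ub^T\vb_1+\ub^T\vb_2$. Since $\ub\in\Cc$ and $\vb_2\in\Cc\pol$, we have $\ub^T\vb_2\leq0$, so $\ub^T\vb\leq\ub^T\vb_1\leq\|\ub\|\|\vb_1\|\leq\|\vb_1\|=\|\bu(\vb,\Cc)\|$ by Cauchy--Schwarz and $\|\ub\|\leq1$. Taking the supremum over $\ub$ gives the desired inequality, and combined with the lower bound the identity follows. (The supremum is attained since $\Cc\cap\Bc^{n-1}$ is compact and the objective is continuous, so ``$\max$'' is justified.)

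There is no real obstacle here; this is a routine consequence of Moreau's theorem and Cauchy--Schwarz. The only point requiring a moment's care is the degenerate case $\bu(\vb,\Cc)=\mathbf{0}$, where one cannot normalize $\vb_1$ and must instead argue directly that both sides vanish using $\vb\in\Cc\pol$. One should also recall that $\mathbf{0}\in\Cc$ (a closed convex cone always contains the origin) to guarantee the feasible set $\Cc\cap\Bc^{n-1}$ is nonempty, so the maximum is well defined in all cases.
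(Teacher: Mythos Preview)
Your proof is correct and follows essentially the same approach as the paper's own proof: Moreau's decomposition to split $\vb$, the polar-cone inequality plus Cauchy--Schwarz for the upper bound, and the explicit choice $\ub=\bu(\vb,\Cc)/\|\bu(\vb,\Cc)\|$ for the lower bound. If anything, your write-up is slightly more careful than the paper's, since you explicitly handle the degenerate case $\bu(\vb,\Cc)=\mathbf{0}$ and justify attainment of the maximum, whereas the paper tacitly assumes $\bu(\vb,\Cc)\neq\mathbf{0}$ when normalizing.
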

\begin{proof}
From \eqref{Moreaus}, we have $\vb=\bu(\vb,\Cc)+\bu(\vb,\Cc\pol)$, where $\li\bu(\vb,\Cc),\bu(\vb,\Cc\pol)\ri=0$. For any $\ub\in\Cc$, $\ub^T\bu(\vb,\Cc\pol)\leq 0$, hence,
$\ub^T \vb \leq \ub^T \bu(\vb,\Cc)$. Since $\ub\in\Bc^{n-1}$, we further find from the Cauchy-Schwarz inequality that $\ub^T \vb\leq\|\bu(\vb,\Cc)\|$. On the other hand, picking $\ub=\frac{\bu(\vb,\Cc)}{\|\bu(\vb,\Cc)\|}\in\Cc\cap\Bc^{n-1}$, achieves $\ub^T \vb=\|\bu(\vb,\Cc)\|$.
%
\end{proof}


\subsection{Proof}
\begin{proof}[Proof of Theorem \ref{thm:main}] When $\z=0$, the problem is trivial, hence, assume $\z\neq 0$. If $\alpha\geq \|\z\|$, we clearly have,
\beq\nn
\sup_{\vb\in\Cc\cap \Sc^{n-1}}\{\z^T\Gb\vb-\alpha\|\Gb\vb\|\}\leq \sup_{\vb\in\Cc\cap \Sc^{n-1}}\{\|\z\|\|\Gb\vb\|-\alpha\|\Gb\vb\|\}\leq 0.
\eeq
Hence, without loss of generality, we may assume $\frac{\Delxf+t}{\gamma_{m-1}}\|\z\|\leq \alpha<\|\z\|$ and $t<\gamma_{m-1}-\Delxf$. Define the set $\Sc_\z=\alpha \Sc^{m-1}-\z$ and let $\TCB:=\Cc\cap \Bc^{n-1}$. Under this notation,
\beq\nn
\min_{\vb\in\Cc\cap \Sc^{n-1}} \alpha\|\Gb\vb\|-\z^T\Gb\vb=\min_{\vb\in\Cc\cap \Sc^{n-1}}\max_{\ub\in\Sc_\z}\ub^T\Gb\vb.
 \eeq
With this $\min\max$ formulation, we can apply Lemma \ref{OTH} and use the fact that $\|\vb\|=1$ to find,
\beq
\Pro\left(\min_{\vb\in \Cc\cap \Sc^{n-1}}\max_{\ub\in \Sc_\z}~\ub^T\Gb\vb\geq 0\right)\geq 2\Pro\left(\min_{\vb\in \Cc\cap \Sc^{n-1}}\max_{\ub\in \Sc_\z}\h^T\ub-\|\ub\|\g^T\vb\geq 0\right)-1\label{double prob},
\eeq
where $\h\sim\Nn(0,\Iden_m)$ and $\g\sim\Nn(0,\Iden_n)$. For the rest of the proof  we focus on the analysis of the \emph{simpler} optimization problem on the right hand side of \eqref{double prob}. Begin by noting that $\Cc\cap\Sc^{n-1}\subset\TCB$, hence,
\begin{align}\nn
\min_{\vb\in \Cc\cap \Sc^{n-1}}\max_{\ub\in \Sc_\z}~\h^T\ub-\|\ub\|\g^T\vb~\geq~\min_{\vb\in \TCB}\max_{\ub\in \Sc_\z}~\h^T\ub-\|\ub\|\g^T\vb.
\end{align}
The only term in which $\vb$ appears above is $\g^T\vb$. From Lemma \ref{super sym}, $\max_{\vb\in\hat{\Cc}}\g^T\vb=\|\bu(\g,\Cc)\|$. Hence, we find,
\beq
\min_{\vb\in \TCB}\max_{\ub\in \Sc_\z}~\h^T\ub-\|\ub\|\g^T\vb~=~\max_{\ub\in \Sc_\z}\left\{\h^T\ub-\|\ub\|\|\bu(\g,\Cc)\|\right\}.\nn
\eeq
Now, we make the change of variable $\ub=\alpha\ab-\z$ and write the right-hand side above as,
\beq
\max_{\ub\in \Sc_\z}\left\{\h^T\ub-\|\ub\|\|\bu(\g,\Cc)\|\right\}=\max_{\ab\in \Sc^{m-1}}\left\{\h^T(\alpha\ab-\z)-\|\alpha\ab-\z\|\|\bu(\g,\Cc)\|\right\}\label{524}
\eeq
Recall from \eqref{double prob}, that we want to lower bound the optimization problem above. The choice of $\ab$ is up to us and a good choice will guarantee a good lower bound on the right hand side of \eqref{524}. Let $\hz:=\frac{\z}{\|\z\|}$. Further, denote the projection of $\h$ onto $\z$ as $\h_2:=\hz\hz^T\h$. Also, $\h_1:=\h-\h_2$ and $\h_1$ is, by construction, orthogonal to $\z$ and is independent of $\h_2$. Let us choose 
$$\ab=\sqrt{1-\beta^2}\frac{\h_1}{\|\h_1\|}+\beta\hz,$$
 with
 $$\beta=\min\left\{\frac{\|\bu(\g,\Cc)\|}{\|\h_1\|},1\right\}.$$
 Then,
\begin{align}\nn
\|\alpha\ab-\z\|^2=\|\alpha\sqrt{1-\beta^2}\frac{\h_1}{\|\h_1\|}-\hz(\|\z\|-\alpha\beta)\|^2=\alpha^2(1-\beta^2)+(\|\z\|-\alpha\beta)^2,
\end{align}
and, denote,
\beq\nn
\kappa_1:=\sqrt{\alpha^2(1-\beta^2)+(\|\z\|-\alpha\beta)^2}\|\bu(\g,\Cc)\|
\eeq
Similarly,
\beq\nn
(\alpha\ab-\z)^T\h=\alpha\sqrt{1-\beta^2}\|\h_1\|-(\|\z\|-\alpha \beta){\h_2^T\hz},
\eeq
and, denote,
$$\kappa_2:=\alpha\sqrt{1-\beta^2}\|\h_1\| \quad\text{ and }\quad \kappa_3:=(\|\z\|-\alpha \beta){\h_2^T\hz}.$$ 
Note that $\g,\h_1,\h_2$ are all independent of each other and individually appears in $\kappa_1,\kappa_2,\kappa_3$ respectively.
From \eqref{524}, we are interested in $\Pro(\kappa_2-\kappa_1-\kappa_3\geq 0)$. Let $\halp=\frac{\alpha}{\|\z\|}$. To lower bound this, we will consider the events,
\beq
\|\h_1\|\geq \gamma_{m-1}-\tau;~~~\|\bu(\g,\Cc)\|\leq \DCC+\tau;~~~\h_2^T\hz\leq \tau\label{events}
\eeq
for some $\tau>0$ (to be determined) and the associated probabilities which are obtained as an application of Lemma \ref{talagrand}.
\begin{itemize}
\item $\Pro(\|\h_1\|\geq \gamma_{m-1}-\tau)\geq 1-\exp(-\frac{\tau^2}{2})$. 
\item $\Pro(\|\bu(\g,\Cc)\|\leq \DCC+\tau)\geq 1-\exp(-\frac{\tau^2}{2})$. 
\item $\Pro(\h_2^T\hz\leq \tau)\geq 1-\frac{1}{2}\exp(-\frac{\tau^2}{2})$
\end{itemize}

The first one holds from the fact that $\ell_2$-norm is $1$-Lipschitz. Second one follows from $1$-Lipschitzness of distance to a convex set \cite{Bertse}. Finally, the third bound follows from the fact that $\h_2^T\hz$ is statistically identical to $\Nn(0,1)$.

From the initial assumptions $\gamma_{m-1}-\DCC>t$. For the rest of the discussion, let $\tau=\frac{t}{3.6}$ and assume the three events in \eqref{events} hold, which happens with probability $1-\frac{5}{2}\exp(-\frac{\tau^2}{2})\geq 1-\frac{5}{2}\exp(-\frac{t^2}{26})$. We will now show that $\kappa_2-\kappa_1-\kappa_3\geq 0$. First, observe that, we have the following list of inequalities.
\begin{align}
\beta&=\frac{\|\bu(\g,\Cc)\|}{\|\h_1\|}\leq \frac{\DCC+\tau}{\gamma_{m-1}-\tau}\leq \frac{\DCC+2.6\tau}{\gamma_{m-1}-\tau}\nn\\&\leq \frac{\DCC+3.6\tau}{\gamma_{m-1}}\leq \frac{\alpha}{\|\z\|}<1.\label{my beta}
\end{align}
Also, since $\|\h_1\|\geq \gamma_{m-1}-\tau$,
\beq
(\frac{\alpha}{\|\z\|}-\beta)\|\h_1\|\geq \DCC+2.6\tau-\|\bu(\g,\Cc)\|\geq 1.6\tau\label{my beta 2}
\eeq
Let us focus on $\kappa_2-\kappa_1$ and let $\halp=\frac{\alpha}{\|\z\|}$. We may write,
\beq
\frac{\kappa_2-\kappa_1}{\|\z\|}=\halp\sqrt{1-\beta^2}\|\h_1\|-\sqrt{\halp^2(1-\beta^2)+(1-\halp\beta)^2}\|\bu(\z,\Cc)\|\nn
\eeq
Further normalizing by $\|\h_1\|$, we find,
\beq
\hkap(\halp):=\frac{\kappa_2-\kappa_1}{\|\z\|\|\h_1\|}=\halp\sqrt{1-\beta^2}-\beta\sqrt{\halp^2(1-\beta^2)+(1-\halp\beta)^2}\nn
\eeq
Expanding $\hkap(\halp)$,
\beq
\hkap(\halp)=\halp\sqrt{1-\beta^2}-\beta\sqrt{1+\halp^2-2\halp\beta}\nn
\eeq
For $\hkap(\halp)$, we have the following result.
\begin{lem} \label{super nice}Let $\beta$ be same as in \eqref{my beta}. Then, for $1\geq \halp\geq\beta$, we have that $\hkap(\halp)\geq\sqrt{ \frac{1-\beta}{2}}(\halp-\beta)$.
\end{lem}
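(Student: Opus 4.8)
The plan is to reduce the inequality to a one‑variable calculus statement, exploiting the identity $1+\halp^2-2\halp\beta=(\halp-\beta)^2+(1-\beta^2)$. First I would rewrite
$$\hkap(\halp)=\halp\sqrt{1-\beta^2}-\beta\sqrt{(\halp-\beta)^2+(1-\beta^2)}$$
and record the two facts that drive the argument: $\hkap(\beta)=0$ (at $\halp=\beta$ the two square‑root terms cancel), and the target inequality is also an equality at $\halp=\beta$. Hence it suffices to control $\hkap'$ on $[\beta,1]$ and integrate. Note also that $\beta<1$ by \eqref{my beta}, so $c:=1-\beta^2>0$ and every expression below is smooth on the interval in question.

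Next I would compute
$$\hkap'(\halp)=\sqrt{1-\beta^2}-\beta\,\frac{\halp-\beta}{\sqrt{(\halp-\beta)^2+(1-\beta^2)}}.$$
Since $t\mapsto t/\sqrt{t^2+c}$ is nondecreasing for $t\ge 0$, the map $\halp\mapsto\hkap'(\halp)$ is nonincreasing on $[\beta,1]$, so $\hkap'(\halp)\ge\hkap'(1)$ there. A direct simplification — pulling the factor $1-\beta$ out of both radicals, using $(1-\beta)^2+(1-\beta^2)=2(1-\beta)$ — gives $\hkap'(1)=\sqrt{1-\beta^2}-\beta\sqrt{(1-\beta)/2}$, and the elementary bound $\hkap'(1)\ge\sqrt{(1-\beta)/2}$ is, after dividing through by $\sqrt{1-\beta}$ and squaring, equivalent to $2(1+\beta)\ge(1+\beta)^2$, i.e. to $\beta\le 1$, which holds. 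Therefore $\hkap'(\halp)\ge\sqrt{(1-\beta)/2}$ for every $\halp\in[\beta,1]$.

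Finally, integrating this bound from $\beta$ to $\halp$ and using $\hkap(\beta)=0$ yields $\hkap(\halp)=\int_\beta^\halp\hkap'(t)\,dt\ge\sqrt{(1-\beta)/2}\,(\halp-\beta)$, which is the claim. (Equivalently: $\hkap$ is concave on $[\beta,1]$ because $\hkap'$ is nonincreasing, so $\hkap$ lies above the chord through $(\beta,0)$ and $(1,\hkap(1))$, whose slope $\hkap(1)/(1-\beta)$ is again $\ge\sqrt{(1-\beta)/2}$ by the same endpoint computation.) I do not expect a genuine obstacle here; the only mildly delicate point is the endpoint estimate at $\halp=1$, where one must correctly extract $1-\beta$ from both square roots before the comparison collapses to $\beta\le 1$. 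This is also why I would avoid attacking the inequality by squaring directly — with two radicals present the monotonicity/integration route is much cleaner.
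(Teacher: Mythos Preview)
Your argument is correct and follows essentially the same route as the paper's: both observe $\hkap(\beta)=0$, show $\hkap'$ is nonincreasing on $[\beta,1]$ (you via monotonicity of $t\mapsto t/\sqrt{t^2+c}$, the paper via computing $\hkap''\le 0$), bound $\hkap'(\halp)\ge\hkap'(1)$, verify $\hkap'(1)\ge\sqrt{(1-\beta)/2}$, and integrate. Your endpoint estimate---reducing directly to $2(1+\beta)\ge(1+\beta)^2$, i.e.\ $\beta\le 1$---is a slightly cleaner variant of the paper's computation, which instead checks that $\sqrt{1+\beta}-\beta/\sqrt{2}$ is minimized at $\beta=1$.
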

\begin{proof}
Observe that $\hkap(\beta)=0$. Using $0\leq \beta<1$ and differentiating with respect to $\halp$, for $\halp\geq\beta$,
\begin{align}
\hkap'(\halp)&=\sqrt{1-\beta^2}-\frac{\beta(\halp-\beta)}{\sqrt{1+\halp^2-2\halp\beta}}\label{first der}.
\end{align}
Differentiating one more time, we find,
\beq
\hkap''(\halp)=-\frac{\beta([1+\halp^2-2\halp\beta]-\halp(\halp-\beta)+\beta(\halp-\beta))}{(1+\halp^2-2\halp\beta)^{3/2}}=\frac{-\beta(1-\beta^2)}{(1+\halp^2-2\halp\beta)^{3/2}}\leq 0.\nn
\eeq
Since the second derivative is nonpositive, this means $\hkap'(\halp)$ is minimized at $\halp=1$ over the region $\beta\leq \halp\leq 1$. Consequently, for $1\geq \halp\geq\beta$, we have,
\beq
\hkap(\halp)\geq \hkap(\halp)-\hkap(\beta)\geq  (\halp-\beta)\hkap'(1)\label{1isworst}
\eeq
To find $\hkap'(1)$, set $\halp=1$ in \eqref{first der},
\beq
\hkap'(1)=\sqrt{1-\beta^2}-\frac{\beta(1-\beta)}{\sqrt{2-2\beta}}=\sqrt{1-\beta}(\sqrt{1+\beta}-\frac{\beta}{\sqrt{2}})\geq\sqrt{\frac{1-\beta}{2}}.\nn
\eeq
Here we used the fact that $\sqrt{1+\beta}-\frac{\beta}{\sqrt{2}}$ is minimized at $\beta=1$ over $0\leq \beta\leq 1$, which can be verified by differentiating. Substituting this in \eqref{1isworst}, we find the desired result.
\end{proof}

Now, applying Lemma \ref{super nice} and using \eqref{my beta 2}, we have,
\beq
\frac{\kappa_2-\kappa_1}{\|\z\|}= \|\h_1\|(\hkap(\halp)-\hkap(\beta))\geq \|\h_1\|(\halp-\beta)\sqrt{\frac{1-\beta}{2}}\geq 1.6\tau\sqrt{\frac{1-\beta}{2}}\label{events5}
\eeq
Finally, to bound $\kappa_3$, for $1\geq \halp\geq \beta$, we use $0\leq \|\z\|-\alpha\beta\leq \|\z\|(1-\beta^2)$. This gives
\beq
0\leq \frac{\kappa_3}{\|\z\|}\leq (1-\beta^2)\tau\nn
\eeq
Combining with \eqref{events5}, we find,
\beq
\frac{\kappa_2-\kappa_1-\kappa_3}{\|\z\|}\geq 1.6\tau\sqrt{\frac{1-\beta}{2}}-(1-\beta^2)\tau\geq0\nn
\eeq
Here, the nonnegativity of the right-hand side is equivalent to,
\beq
2.56\frac{1-\beta}{2}\geq (1-\beta^2)^2\iff 1.28\geq (1+\beta)(1-\beta^2)\nn
\eeq
Differentiating the $(1+\beta)(1-\beta^2)$ term, we find that, it is maximized at $\beta=\frac{1}{3}$ and is upper bounded by $\frac{32}{27}\leq 1.28$. In summary, we have shown that, with probability $1-\frac{5}{2}\exp(-\frac{t^2}{26})$ \eqref{events} hold with $\tau=\frac{t}{3.6}$, and we have, $\kappa_2-\kappa_1-\kappa_3\geq 0$; which also implies nonnegativity of right-hand side of \eqref{524}. Now, using \eqref{double prob}, we find the desired result.
\end{proof}

\end{document}